\crefname{section}{Sect.}{Sects.}
\crefname{appendix}{App.}{Apps.}
\crefname{definition}{Def.}{Defs.}
\crefname{proposition}{Prop.}{Props.}
\Crefname{section}{Section}{Sections}
\Crefname{appendix}{Appendix}{Appendices}
\Crefname{definition}{Definition}{Definitions}
\Crefname{proposition}{Proposition}{Propositions}
\newcommand{\eq}{\leftrightarrow}
\newcommand{\imp}{\rightarrow}
\newcommand{\et}{\wedge}
\newcommand{\vel}{\vee}
\newcommand{\Et}{\bigwedge}
\newcommand{\Vel}{\bigvee}
\renewcommand{\phi}{\varphi}
\newcommand{\power}{\mathcal P}
\newcommand{\M}{\widehat{K}}
\newcommand{\lang}{\mathcal L}
\newcommand{\powerset}{\mathcal{P}}
\newcommand{\ourpmb}[1]{\mathbb{#1}}
\newcommand{\HH}{\mathcal H}
\newcommand{\KK}{\mathcal K}
\newcommand{\KH}{\mathit{KH}}
\newcommand{\ce}{\colonequals}
\newcommand{\lanhopekn}{\mathcal{L}_{\mathit{KH}}}
\newcommand{\correct}[1]{\mathit{correct_{#1}}}
\newcommand{\oneH}{\mathrm{one}\HH}
\newcommand{\HinK}{\HH\mathrm{in}\KK}
\newcommand{\Sfive}{\mathsf{S5}}
\newcommand{\KBfour}{\mathsf{KB4}}
\newcommand{\Prop}{\mathsf{Prop}}
\newcommand{\A}{\mathcal{A}}
\newcommand{\MP}{\mathit{MP}}
\newcommand{\Nec}{\mathit{Nec}}
\newcommand{\axKH}{\mathscr{K\!\!H}}
\newcommand{\byzf}{\mathit{Byz}_{\!f}}
\newcommand{\byz}{\mathit{Byz}}
\newcommand{\pub}{\textit{pub}}
\newcommand{\priv}{\textit{priv}}
\newcommand{\Benthem}[1]{}
\newcommand{\Eijck}[1]{}
\newcommand{\Hoek}[1]{}
\newcommand{\Ditmarsch}[1]{}
\begin{document}
\title{A Logic for Repair and State Recovery in Byzantine Fault-tolerant Multi-agent Systems}
\titlerunning{A Logic for Repair and State Recovery}
\author{Hans van Ditmarsch\inst{1}\orcidID{0000-0003-4526-8687} \and \\ Krisztina Fruzsa\inst{2}\orcidID{0000-0002-2013-1003}\thanks{Was a PhD~student in the  FWF doctoral program LogiCS~(W1255) and also supported by the FWF~project DMAC (P32431).} \and \\ Roman Kuznets\inst{2}\orcidID{0000-0001-5894-8724}\thanks{This research was funded in whole or in part by the Austrian Science Fund (FWF) project ByzDEL~[\href{https://doi.org/10.55776/P33600}{10.55776/P33600}].  For open access purposes, the author has applied a CC BY public copyright license to any author accepted manuscript version arising from this submission.}\Envelope \and \\ Ulrich Schmid\inst{2}\orcidID{0000-0001-9831-8583}
}
\authorrunning{H. van Ditmarsch \and K. Fruzsa \and R. Kuznets \and U. Schmid}
\institute{CNRS, University of Toulouse, IRIT, France \\ \email{hansvanditmarsch@gmail.com}\\
  \and Embedded Computing Systems Group, TU Wien, Austria \\ \email{krisztina.fruzsa@tuwien.ac.at}, \email{\{rkuznets,s\}@ecs.tuwien.ac.at}}

\maketitle
\begin{abstract}
We provide novel epistemic logical language and semantics for  modeling and analysis of byzantine fault-tolerant multi-agent systems, with the intent of not only facilitating reasoning about the agents' fault status but also supporting model updates for repair~and state recovery. Besides the standard knowledge modalities, our logic provides  additional agent-specific hope modalities capable of expressing that an agent is not faulty, and also dynamic modalities enabling change to the agents' correctness status. These dynamic modalities are interpreted as model updates that come in three flavors: fully public, more private, and/or involving factual change. Tailored examples demonstrate the utility and flexibility of our logic for modeling a wide range of fault-detection, isolation, and recovery (FDIR) approaches in mission-critical distributed systems. By providing complete axiomatizations for all  variants of our logic, we also create a  foundation for building future verification tools for this important class of fault-tolerant applications. 
\keywords{byzantine fault-tolerant distributed systems \and FDIR \and multi-agent systems \and modal logic}
\end{abstract}

\section{Introduction and Overview} 
\label{section:introduction}

\paragraph*{State of the art.} A few years ago, the standard epistemic analysis of distributed systems via the runs-and-systems framework \cite{bookof4,HM90,Mos15TARK} was finally extended~\cite{PKS19:TR,KPSF19:FroCos,KPSF19:TARK} to fault-tolerant systems with (fully) \emph{byzantine} agents~\cite{lamport1982byzantine}.\footnote{The term `byzantine' is not always used consistently in the literature. In some instances, agents were called byzantine despite exhibiting only restricted (sometimes even benign~\cite{DworkM90}) types of faults. In those terms, agents we call `byzantine' in this paper would be called `fully byzantine.'} 
Byzantine agents
 constitute the worst-case scenario in terms of fault-tolerance: not only can they arbitrarily deviate from their respective protocols, but the perception of their own actions and observed events can be corrupted, possibly unbeknownst to them, resulting in false memories. 
Whether byzantine agents are actually present in a system, the very possibility of their presence has drastic and debilitating effects on the epistemic state of all agents, including the correct (i.e., non-faulty) ones, due to the inability to rule out so-called \emph{brain-in-a-vat} scenarios~\cite{PutnamBrainBook}: a brain-in-a-vat agent is a faulty agent with completely corrupted perceptions that provide no reliable information about the system~\cite{KPSF19:FroCos}. In such a system, \emph{no}~agent can ever know certain elementary facts, such as their own or some other agent's correctness, no matter whether the system is asynchronous~\cite{KPSF19:FroCos} or synchronous~\cite{schlogl2020persistence}. Agents can, however, sometimes know their own faultiness or obtain belief in some other agents' faultiness~\cite{SS23:TARK}.

In light of knowledge $K_i \phi$ often being unachievable in systems with byzantine agents, \cite{KPSF19:FroCos} also introduced a weaker epistemic notion called~\emph{hope}. It was initially defined as 
\[
H_i \phi \ce \correct{i} \imp K_i(\correct{i} \imp \phi),
\] 
where the designated atom~$\correct{i}$ represents agent~$i$'s correctness. 
In this setting, one can define belief as $B_i \phi \ce K_i(\correct{i} \imp \phi)$ \cite{SS23:TARK}. Hope was successfully used in~\cite{FKS21} to analyze the \emph{Firing Rebels with Relay}~(FRR) problem, which is
the core of the well-known \emph{consistent broadcasting} primitive~\cite{ST87}. Consistent broadcasting has been used as a pivotal building block in fault-tolerant distributed algorithms, e.g.,~for byzantine fault-tolerant clock synchronization~\cite{DFPS14:JCSS,FS12:DC,RS11:TCS,ST87,WS09:DC}, synchronous consensus~\cite{ST87:abc}, and as a general reduction of distributed task solvability in systems with byzantine failures to solvability in systems with crash failures~\cite{MTH14:STOC}.\looseness=-1

The hope modality was first axiomatized in~\cite{Fruzsa23} using $\correct{i}$ as designated atoms. Whereas the resulting logic turned out to be well-suited for modeling and analyzing problems in byzantine fault-tolerant distributed computing systems like FRR~\cite{FKS21}, it is unfortunately not normal. Our long-term goal of also creating the foundations for \emph{automated} verification of such applications hence suggested to look for an alternative axiomatization. In~\cite{hvdetal.aiml:2022}, we presented a normal modal logic that combines $\KBfour_n$ hope modalities with  $\Sfive_n$ knowledge modalities, which is based on defining $\correct{i} \ce \neg H_i \bot$ via frame-characterizable axioms. This logic indeed unlocks powerful techniques developed for normal modal logics both in model checkers like DEMO \cite{Eijck07:dadoem} or MCK~\cite{gammieetal:2004}~and, in particular, in epistemic theorem proving environments such as LWB \cite{heuerdingetal:1996}.

Still, both versions \cite{hvdetal.aiml:2022,Fruzsa23} of the logic of hope target byzantine fault-tolerant distributed systems 
only where, once faulty, agents remain faulty and cannot be ``repaired'' to become correct again.
Indeed, solutions for problems like FRR employ \emph{fault-masking techniques} based on replication~\cite{Sch90}, which prevent the adverse effects of the faulty agents from contaminating the behavior of the correct agents but do not attempt to change the behavior of the faulty agents. Unfortunately, fault masking is only feasible if no more than a certain fraction~$f$ of the overall $n$~agents in the system may become faulty  (e.g., $n \geq 3f+1$ in the case of~FRR). Should it ever happen that more than $f$ agents become faulty in a run, no properties can typically be guaranteed anymore, which would be devastating in mission-critical applications. 

\emph{Fault-detection, isolation, and recovery} (FDIR) is an alternative fault-tol\-er\-ance technique, which attempts to discover and repair agents that became faulty in order to subsequently re-integrate them into the system. The primary target here are permanent faults, which do not go away ``by themselves'' after some time but rather require explicit corrective actions. Pioneering fault-tolerant systems implementations like MAFT \cite{KWFT88} and GUARDS \cite{PABB99} combined fault-masking techniques like byzantine agreement \cite{lamport1982byzantine} and FDIR approaches to harvest the best of both worlds.

Various paradigms have been proposed for implementing the  
steps in FDIR: Fault-detection can be done by a central FDIR unit, which is implemented in some very reliable technology~and oversees the whole distributed system. Alternatively, distributed FDIR employs distributed diagnosis \cite{WLS97}, e.g., based on evidence \cite{AR89}, and is typically combined with byzantine consensus \cite{lamport1982byzantine} to ensure agreement among the replicated FDIR~units. Agents diagnosed as faulty are subsequently forced to reset and execute built-in self tests, possibly followed by repair actions like hardware reconfiguration. Viewed at a very abstract level, the FDI~steps of FDIR thus cause a faulty agent to become correct again.
Becoming correct again is, however, not enough to enable the agent to also participate in the (on-going) execution of the remaining system. The latter also requires a successful \emph{state recovery} step~R, which makes the local state of the agent consistent with the current global system state. Various recovery techniques have been proposed for this purpose, ranging from  pro-active recovery \cite{Rus96}, where the local state of \emph{every} agent is periodically replaced by a majority-voted version, to techniques based on checkpointing \& rollback or message-logging~\&~replay, see \cite{EAWJ02} for a survey. The common aspect of all these techniques is that the local state of the recovering agent is changed based on information originating from other agents.

\paragraph*{Our contribution.} In this paper, we provide the first logic that not only enables one to reason about the fault status of agents, but also provides mechanisms for updating the model so as to change the fault status of agents, as well as their local states. 
Instead of handling such dynamics in the byzantine extension of the runs-and-systems framework ~\cite{PKS19:TR,KPSF19:FroCos,KPSF19:TARK}, i.e., in a temporal epistemic setting, we do it in a dynamic epistemic setting: we restrict our attention to the instants where the ultimate goal of (i)~the~FDI~steps (successfully repairing a faulty processor) and (ii)~the~R~step (recovering the repaired processor's local state) is reached, and investigate the dynamics of the agents' correctness/faultiness and its 
interaction with knowledge at these instants. 

Our approach enables us to separate the issue of (1) verifying the correctness of the specification of an FDIR mechanism from the problem of (2) guaranteeing the correctness of its protocol implementation, and to focus on (1). Indeed, verifying the correctness of the implementation of some specification is the standard problem in formal verification, and powerful tools exist that can be used for this purpose. However, even a fully verified FDIR protocol would be completely useless if the FDIR specification was erroneous from the outset, in the sense that it does not correctly identify and hence repair faulty agents in some cases. Our novel logics and the underlying model update procedures provide, to the best of our knowledge, the first suitable foundations for (1), as~they allow to formally specify (1.a) \emph{when} a model update shall happen, and (1.b) the result of the model update. While we cannot claim that no better approach exists, our various examples at least reveal that we can model many crucial situations arising in FDIR schemes.

In order to introduce the core features of our logic and its update mechanisms, we use a simple example:
Consider two agents $a$ and $b$, each knowing their own local states, where global state $ij$, with $i,j \in \{0,1\}$, means that $a$'s local state is~$i$ and $b$'s local state is~$j$. To describe agent $a$'s local state $i$ we use an atomic proposition $p_a$, where $p_a$~is true if $i=1$ in global state $ij$ and $p_a$ is false if $i=0$, and similarly for $b$'s local state $j$ and atomic proposition $p_b$. 

\begin{center}
\begin{tikzpicture}
\node (00) at (0,0) {$\ourpmb{0}0$};
\node (10) at (2,0) {$1\ourpmb{0}$};
\node (01) at (0,2) {$0\ourpmb{1}$};
\node (11) at (2,2) {$1\ourpmb{1}$};
\draw[-] (00) -- node[above] {$b$} (10);
\draw[-] (01) -- node[above] {$b$} (11);
\draw[-] (00) -- node[left] {$a$} (01);
\draw[-] (10) -- node[right] {$a$} (11);
\node (imp) at (5,1) {$\stackrel{\text{$a$ becomes more correct}} \Longrightarrow$};
\node (00r) at (8,0) {$\ourpmb{0}0$};
\node (10r) at (10,0) {$1\ourpmb{0}$};
\node (01r) at (8,2) {$\ourpmb{01}$};
\node (11r) at (10,2) {$\ourpmb{11}$};
\draw[-] (00r) -- node[above] {$b$} (10r);
\draw[-] (01r) -- node[above] {$b$} (11r);
\draw[-] (00r) -- node[left] {$a$} (01r);
\draw[-] (10r) -- node[right] {$a$} (11r);
\end{tikzpicture}
\end{center}

Knowledge and hope of the agents is represented in a Kripke model $M$ for our system consisting of four states (worlds), shown in the left part of the figure above. Knowledge $K_i$ is interpreted by a knowledge relation $\KK_i$ and hope $H_i$ is interpreted by a hope relation $\HH_i$. Worlds that are $\KK_i$-indistinguishable, in the sense that agent $i$ cannot distinguish which of the worlds is the actual one, are connected by an $i$-labeled link, where we assume reflexivity, symmetry, and transitivity. Worlds $ij$ that are in the non-empty part of the $\HH_i$ relation, where agent $i$ is correct, have $i$ outlined as $\ourpmb{0}$ or $\ourpmb{1}$. For example, in the world depicted as $0\ourpmb{1}$ above, agent $a$ is faulty and agent $b$ is correct.\looseness=-1

Now assume that we want agent $a$ to become correct in states $01$ and $11$ where $p_b$ is true. For example, this could be dictated by an FDIR mechanism that caused~$b$ to diagnose $a$ as faulty. Changing the fault status of $a$ accordingly (while not changing the correctness of $b$) results in the updated model on the right in the above figure. Note  that $a$ was correct in state $00$ in the left model, but did not know this, whereas agent $a$ knows that she is correct in state $00$ after the update. Such a model update will be specified in our approach by a suitable \emph{hope update formula} for every agent, which, in the above example, is $\neg H_a \bot \lor p_{b}$ for agent $a$ and $\neg H_{b} \bot$ for agent $b$. Note carefully that every hope update formula implicitly specifies both (a) the situation in the original model in which a change of the hope relation is applied, namely, some agent $i$'s correctness/faultiness status encoded as $\neg H_i\bot/H_i \bot$, and (b) the result of the respective update of the hope relation.

Clearly, different FDIR approaches will require very different hope update formulas for describing their effects. 
In our logic, we provide two basic hope update mechanisms that can be used here: \emph{public} updates, in which  the agents are certain about the exact hope updates occurring at other agents, and \emph{private} updates (strictly speaking, semi-private updates \cite{hvd.jolli:2002}), in which the agents may be uncertain about the particular hope updates occurring at other agents.
The former is suitable for FDIR approaches where a central FDIR unit in the system triggers and coordinates all~FDIR~activities, the latter is needed for some distributed FDIR schemes.

Moreover, whereas the agents' local states do not necessarily have to be changed when becoming correct, FDIR usually
requires to erase traces of erroneous behavior before recovery from the history in the R~step. Our logic hence provides an additional \emph{factual change} mechanism for accomplishing this as well. 
For example, simultaneously with or after becoming correct, agents may also need to change their local state by making false the atomic proposition that records that step 134 of the protocol was (erroneously) executed. 
Analogous to hope update formulas, suitable \emph{factual change formulas} are used to encode when and how atomic propositions will change. 
Besides syntax and semantics, we provide  complete axiomatizations of all variants of our logic, and demonstrate its utility and flexibility for modeling a wide range of FDIR mechanisms by means of many application examples. In order to focus on the essentials, we use only \mbox{2-agent} examples for highlighting particular challenges arising in FDIR. We note, however, that it is usually straightforward to generalize those for more than two agents, and to even combine them for modeling more realistic FDIR scenarios.

\emph{Summary of the utility of our logic.} Besides contributing novel model update mechanisms to the state-of-the-art in dynamic epistemic logic, the main utility of our logic is that it enables epistemic reasoning and verification of FDIR mechanism \emph{specifications}. Indeed, even a fully verified protocol implementation of some FDIR mechanism would be meaningless if its specification allowed unintended effects. Our hope update/factual change formulas
formally and exhaustively specify what the respective model update accomplishes, i.e., encode both the preconditions for changing some agent's fault status/atomic propositions and the actual change. Given an initial model and these update formulas, our logic thus enables one to check (even automatically) whether the updated model has all the properties intended by the designer, whether certain state invariants are preserved by the update, etc. Needless to say, there are many reasons why a chosen specification might be wrong in this respect: the initial model might not provide all the required information,
undesired fault status changes could be triggered in some worlds, or supporting information required for an agent to recover its local state might not be available. The ability to (automatically) verify the absence of such undesired effects of the specification of an FDIR mechanism is hence important in the design of mission-critical distributed systems.

\paragraph*{Paper organization.} \Cref{sec:logic} recalls the syntax and semantics of the logic for knowledge and hope \cite{hvdetal.aiml:2022}. \Cref{sec:public} expands this language with dynamic modalities for publicly changing hope. 
\Cref{sec.semi-private} generalizes the language to private updates. 
In \cref{sec.factual}, we add factual change to our setting. 
Some conclusions in \cref{sec:conclusions} complete our paper.

\section{A Logic of Hope and Knowledge}
\label{sec:logic}

We succinctly present the logic of hope and knowledge \cite{hvdetal.aiml:2022}. Throughout our presentation, let $\A\ce\{1, \dots, n\}$ be a finite set of agents and let $\Prop$ be a non-empty countable set of atomic propositions.

\paragraph*{Syntax.} The language $\lanhopekn$ is  defined as
\begin{equation}
\label{eq:BNF}
\phi \coloncolonequals p \mid \neg \phi \mid (\phi \et \phi) \mid K_i \phi \mid H_i \phi
,
\end{equation}
where $p \in \Prop$ and $i \in \A$. We take $\top$ to be the abbreviation for some fixed propositional tautology and  $\bot$ for $\neg \top$. We also use standard abbreviations for the remaining boolean connectives, $\M_i \phi$ for the dual modality $\neg K_i\neg \phi$ for `agent $a$ considers $\phi$ possible', $\widehat{H}_i \phi$ for $\neg H_i \neg \phi$, and $E_G \phi$ for mutual knowledge $\bigwedge_{i \in G} K_i \phi$ in a group $G \subseteq \A$. Finally, we define belief $B_i \phi$ as $K_i (\neg H_i \bot \imp \phi)$; we recall that $\neg H_i \bot$ means that $i$ is correct.

\paragraph*{Structures.}
A \emph{Kripke model} is a tuple $M=(W,\pi,\KK,\HH)$ where $W$ is a non-empty set of \emph{worlds} (or~\emph{states}), $\pi \colon \Prop \to \powerset(W)$ is a \emph{valuation function} mapping each atomic proposition to the set of worlds where it is true, and $\KK: \A \imp \power(W \times W)$ and $\HH: \A \imp \power(W \times W)$ are functions that assign to each agent $i$ a \emph{knowledge relation} $\KK_i \subseteq W \times W$ respectively a \emph{hope relation} $\HH_i \subseteq W \times W$, where we have written $\KK_i$ resp.\ $\HH_i$ for $\KK(i)$ and $\HH(i)$. We write $\HH_i(w)$ for $\{v \mid (w,v) \in \HH_i\}$ and  $w\HH_iv$ for $(w,v) \in \HH_i$, and similarly for $\KK_i$. We require knowledge relations~$\KK_i$ to be equivalence relations and hope relations $\HH_i$ to be shift-serial (that is, if $w\HH_iv$, then there exists a $z\in W$ such that $v\HH_iz$). In addition, the following conditions should also be satisfied:
\begin{align*}
\HinK&: \qquad \HH_i \subseteq \KK_i,\\
\oneH&: \qquad (\forall w,v \in W) (\HH_i(w)\ne\varnothing \land \HH_i(v)\ne\varnothing \land 
 w\KK_iv  \Longrightarrow w\HH_iv).
\end{align*}
It can be shown that all $\HH_i$ relations are so-called \emph{partial equivalence relations}: they are  transitive and symmetric binary relations \cite{mimo}. 

The class of Kripke models $(W,\pi,\KK,\HH)$ (given~$\A$~and~$\Prop$) is named $\KK\HH$.

\paragraph*{Semantics.} 
We  define truth for formulas~$\phi \in \lanhopekn$ at a world $w$ of a model $M = (W,\pi,\KK,\HH) \in \KK\HH$ in the standard way: in particular,
$M,w \models p$ if{f} $w \in \pi(p)$ where $p \in \Prop$; boolean connectives are classical;
$M,w \models K_i \phi$ if{f}  $M,v \models \phi$  for all~$v$ such that  $w \KK_i v$; and 
$M,w \models H_i \phi$ if{f}  $M,v \models \phi$ for all~$v$ such that  $w \HH_i v$.
A formula $\phi$ is \emph{valid in model~$M$}, denoted~$M\models \phi$, if{f} $M,w \models \phi$ for all~$w \in W$, and it is \emph{valid}, notation $\models \phi$ (or $\KK\HH \models \phi$) if{f} it is valid in all models $M \in \KK\HH$.  \looseness=-1

\paragraph*{Axiomatization.}  The axiom system $\axKH$ for knowledge and hope is given below.
\[\begin{array}{ll@{\quad}|@{\quad}ll}
P & \text{all propositional tautologies} & T^K  & K_i\phi \imp \phi  \\
H^\dagger & H_i \neg H_i \bot &  \KH & H_i \phi \leftrightarrow \bigl(\neg H_i\bot \imp K_i(\neg H_i\bot \imp \phi)\bigr) \\
K^K  & K_{i}(\phi \imp \psi) \land K_{i}\phi \imp K_{i}\psi & \MP & \text{from } \phi \text{ and } \phi \imp \psi, \text{ infer } \psi \\
4^K  & K_{i}\phi \imp K_{i}K_{i}\phi & \Nec^K & \text{from } \phi, \text{ infer } K_i \phi\\
5^K  &\neg K_i\phi \imp K_{i} \neg K_{i}\phi
\end{array}\]

\begin{theorem}[{\cite{hvdetal.aiml:2022}}]
\label{theorem:sckh}
$\axKH$ is sound and complete with respect to $\KK\HH$.
\end{theorem}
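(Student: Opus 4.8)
The plan is to establish soundness by a direct validity check and completeness by a canonical model construction, with essentially all of the real work concentrated in the forced interplay between the hope relations $\HH_i$ and the knowledge relations $\KK_i$ that the axiom $\KH$ encodes.

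For \textbf{soundness}, I would verify that every axiom is valid on each $M\in\KK\HH$ and that $\MP$ and $\Nec^K$ preserve validity. The modal axioms $T^K,K^K,4^K,5^K$ are valid simply because each $\KK_i$ is an equivalence relation, so that part is standard S5. The two genuinely new obligations are $H^\dagger$ and $\KH$. For $H^\dagger$ I would note that $M,v\models\neg H_i\F$ holds exactly when $\HH_i(v)\neq\varnothing$, so that $H_i\neg H_i\F$ expresses precisely shift-seriality of $\HH_i$. For $\KH$ I would split on whether $\HH_i(w)=\varnothing$: if so, both sides hold vacuously; otherwise the right-hand side collapses to $K_i(\neg H_i\F\imp\phi)$, and its equivalence with $H_i\phi$ is exactly what $\HinK$ delivers in one direction (turning $\HH_i$-successors into $\KK_i$-successors) and what shift-seriality together with $\oneH$ delivers in the other (turning correct $\KK_i$-successors into $\HH_i$-successors). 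This is the conceptual core of the soundness argument.

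For \textbf{completeness}, I would build the canonical model $M^c=(W^c,\pi^c,\KK^c,\HH^c)$ over maximal $\axKH$-consistent sets, setting $w\KK^c_i v$ iff $\{\phi:K_i\phi\in w\}\subseteq v$ and $w\HH^c_i v$ iff $\{\phi:H_i\phi\in w\}\subseteq v$. That each $\KK^c_i$ is an equivalence relation is the standard S5 canonical computation. Before checking frame conditions I would record two derived facts: that $H_i$ is a \emph{normal} modality (both $\Nec$ for $H_i$ and the $K$-axiom for $H_i$ follow from $\KH$ and the normality of $K_i$), and that $\proves K_i\phi\imp H_i\phi$. Using $H^\dagger$ and the latter I would prove the key characterization: $\HH^c_i(w)=\varnothing$ when $H_i\F\in w$, and otherwise $w\HH^c_i v$ iff $w\KK^c_i v$ and $\neg H_i\F\in v$. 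From this the three frame conditions follow quickly: $\HinK$ is immediate; shift-seriality holds because $\neg H_i\F\in v$ whenever $w\HH^c_i v$ (via $H^\dagger$), so $v\HH^c_i v$; and $\oneH$ holds because $\HH^c_i(w)\neq\varnothing$ iff $\neg H_i\F\in w$, which reduces $\oneH$ to the characterization itself. Hence $M^c\in\KK\HH$.

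Finally I would prove the Truth Lemma $M^c,w\models\phi$ iff $\phi\in w$ by induction on $\phi$. The boolean and $K_i$ cases are standard, the latter via the existence lemma for $\KK^c_i$, and the $H_i$ case is the usual normal-modal Truth Lemma step for the canonically defined $\HH^c_i$: the ``only if'' direction is immediate from the definition, while the ``if'' direction uses the existence lemma for $H_i$, whose consistency argument is precisely where the normality of $H_i$ is needed. Completeness then follows in the standard fashion, since any $\axKH$-consistent formula extends to some $w\in W^c$ and is satisfied there in $M^c\in\KK\HH$. I expect the completeness direction to be the main obstacle, and within it the verification that one uniform canonical definition of $\HH^c_i$ simultaneously secures all three frame conditions; the confluence-style condition $\oneH$ is the most delicate, and the derived characterization of $\HH^c_i$ is exactly the device that makes it tractable.
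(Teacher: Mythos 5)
This theorem is not proved in the present paper at all---it is imported from \cite{hvdetal.aiml:2022}---and your proposal is correct and follows essentially the same standard route as that reference: soundness by checking that $H^\dagger$ expresses shift-seriality and that $\KH$ encodes the interplay of the three frame conditions, and completeness via a canonical model whose hope relation is characterized (using $H^\dagger$, the derivable $K_i\phi\imp H_i\phi$, and the derivable normality of $H_i$) so that all of $\HinK$, shift-seriality, and $\oneH$ follow at once. One minor, trivially repaired bookkeeping slip in your soundness argument for $\KH$: shift-seriality is needed \emph{together with} $\HinK$ in the direction from $K_i(\neg H_i\bot\imp\phi)$ to $H_i\phi$ (after turning an $\HH_i$-successor $v$ into a $\KK_i$-successor, you still need $v\models\neg H_i\bot$ to discharge the antecedent), whereas the converse direction uses $\oneH$ alone---not ``shift-seriality together with $\oneH$'' as you state.
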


\section{Public Hope Update}
\label{sec:public}

\subsection{Syntax and Semantics}

\begin{definition}[Logical language]
Language $\lanhopekn^{\textit{pub}}$ is obtained from $\lanhopekn$ by adding one new  construct: 
\[
\phi \coloncolonequals p \mid \neg \phi \mid (\phi \et \phi) \mid K_i \phi \mid H_i \phi \mid [\underbrace{\phi,\dots,\phi}_n]\varphi
.
\]
\end{definition}
We read a formula of the shape $[\phi_1,\dots,\phi_n]\psi$, often abbreviated as $[\vec{\phi}]\psi$ as follows: after revising or updating hope for agent $i$ with respect to $\phi_i$ for all agents $i \in \A$ simultaneously, $\psi$~(is~true). We call the formula $\phi_i$ the {\em hope update formula for agent $i$}. 

\begin{definition}[Semantics of public hope update]
Let a tuple $\vec{\varphi}\in (\lanhopekn^{\textit{pub}})^n$, a model $M = (W,\pi,\KK, \HH) \in \KK\HH$, and a world $w \in W$ be given. Then 
\[
M, w \models [\vec{\varphi}]\psi \quad \text{if{f}} \quad M^{\vec{\varphi}}, w \models \psi,
\]
where ${M^{\vec{\varphi}}} \ce {(W,\pi,\KK,\HH^{\vec{\varphi}})}$ 
such that for each agent $i \in \mathcal{A}$:
\[w \HH^\chi_i v   \qquad \text{if{f}} \qquad w \KK_i v,   \quad  M,w \models \chi, \quad \text{and} \quad M,v \models \chi\]
and where we write $\HH^{\chi}_i$ for $(\HH^{\vec{\varphi}})_i$ if the $i$-th formula in $\vec{\phi}$ is $\chi$.
\end{definition}
If $M,w \not\models \chi$, then $\HH^{\chi}_{i}(w) = \varnothing$: agent $i$ is faulty in state $w$ after the update, i.e.,~$H_i\bot$~is true. Whereas if $M,w \models \chi$, then $\HH^{\chi}_{i}(w) \neq \varnothing$: agent $i$ is correct in state $w$ after the update, i.e., $\neg H_i\bot$~is true. If the hope update formula for agent~$i$ is $\neg H_i \bot$, then $\neg H_i \bot$ is true in the same states before and after the update. Therefore, $\HH^{\mathstrut\neg H_i \bot}_{i}=\HH_{i}$: the hope relation for $i$ does not change. On the other hand, if the hope update formula for agent~$i$ is  $H_i \bot$, then $\HH^{H_i \bot}_{i}(w) = \varnothing$ if{f} $\HH_i(w) \ne \varnothing$: the~correctness of agent $i$ flips in every state. If we wish to model that agent $i$ becomes \emph{more correct} (in the model), then the hope update formula for agent $i$ should have the shape $\neg H_i \bot \vel \phi$: the~left disjunct~$\neg H_i \bot$ guarantees that in all states where $i$ already was correct, she remains correct.
We write \looseness=-1
\[ [\phi]_i\psi \quad \text{ for } \quad [\neg H_1 \bot, \dots, \neg H_{i-1} \bot, \,\,\phi,\,\, \neg H_{i+1} \bot, \dots, \neg H_n \bot]\psi
\]
Similarly, we write $[\phi]_G\psi$ if the hope update formulas for all agents $i\in G$ is $\phi$ and other agents $j$ have the trivial hope update formula $\neg H_j \bot$.

\begin{proposition}
\label{proposition:updatedmodelstaysinkh}
If $\vec{\varphi}\in (\lanhopekn^{\textit{pub}})^n$ and $M = (W,\pi,\KK,\HH) \in \KK\HH$, then $M^{\vec{\varphi}} \in \KK\HH$.
\end{proposition}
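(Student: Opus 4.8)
The plan is to exploit that the update only touches the hope relations: since $M^{\vec{\varphi}} = (W,\pi,\KK,\HH^{\vec{\varphi}})$ shares its carrier set, valuation, and knowledge relations with $M \in \KK\HH$, the set $W$ is non-empty, $\pi$ is a valuation, and each $\KK_i$ is an equivalence relation, all inherited from $M$ for free. It therefore suffices to verify, for each agent $i$, that the updated relation $\HH^{\chi}_i$ (with $\chi \ce \varphi_i$) is shift-serial and satisfies $\HinK$ and $\oneH$.

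The key structural observation, which makes everything routine, is that $\HH^{\chi}_i$ is simply the restriction of the equivalence relation $\KK_i$ to the set $S_\chi \ce \{u \in W \mid M,u \models \chi\}$ of worlds satisfying $\chi$ in the \emph{original} model $M$. Indeed, unfolding the definition, $w \HH^{\chi}_i v$ holds exactly when $w \KK_i v$ and both $w,v \in S_\chi$. Note that $S_\chi$ is a well-defined fixed subset of $W$, because the update formula $\chi$ is evaluated in $M$, not in $M^{\vec{\varphi}}$. Condition $\HinK$, i.e.\ $\HH^{\chi}_i \subseteq \KK_i$, is then immediate from this description.

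For the remaining two conditions I would argue as follows. Shift-seriality: suppose $w \HH^{\chi}_i v$; then $v \in S_\chi$, and reflexivity of $\KK_i$ gives $v \KK_i v$, so $v \HH^{\chi}_i v$, and taking $z \ce v$ witnesses shift-seriality (in fact $\HH^{\chi}_i$ is reflexive on its own domain). For $\oneH$: if $\HH^{\chi}_i(w) \ne \varnothing$, then some witness $w'$ satisfies $w \HH^{\chi}_i w'$, which forces $w \in S_\chi$; symmetrically $\HH^{\chi}_i(v) \ne \varnothing$ forces $v \in S_\chi$. Together with the hypothesis $w \KK_i v$, the three defining conditions for $w \HH^{\chi}_i v$ are all met, establishing $\oneH$.

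There is no genuine obstacle here: once one spots that the new hope relation is a restriction of $\KK_i$ to a fixed definable set, symmetry and transitivity---and hence the partial-equivalence character noted after the structure definition---follow mechanically from the corresponding properties of $\KK_i$. The only points requiring a moment's care are to use shift-seriality rather than plain seriality, where the witness $z = v$ works precisely because $v \in S_\chi$, and to observe that $S_\chi$ is computed in $M$, so that no circularity arises in the definition of $M^{\vec{\varphi}}$.
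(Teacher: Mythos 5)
Your proof is correct and takes essentially the same route as the paper: it verifies shift-seriality, $\HinK$, and $\oneH$ directly from the definition of $\HH^{\chi}_i$ (your $\oneH$ argument is word-for-word the paper's), with your restriction-to-$S_\chi$ framing being just a convenient repackaging of the same checks. The only cosmetic difference is the shift-seriality witness: you use reflexivity of $\KK_i$ (taking $z = v$), whereas the paper uses symmetry (taking $z = w$); both are equally valid.
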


\begin{proof}
Let $i \in \mathcal{A}$ and $\chi$ be the $i$th formula in $\vec{\phi}$. We need to show that  relation~$\HH^{\chi}_i$ is shift-serial and that it satisfies properties $\HinK$ and $\oneH$.
\begin{itemize} 
\item {[shift-serial]:} Let $w \in W$. Assume $v \in \HH^{\chi}_i (w)$, that is, $w \KK_i v$, and $M, w \models \chi$ and $M, v \models \chi$. Now $v \KK_i w $ follows by symmetry of $\KK_i$. Therefore, $\HH^{\chi}_i (v) \ne \varnothing$ since $w \in \HH^{\chi}_i (v)$.
\item {[$\HinK$]:}  This follows by definition.
\item {[$\oneH$]:}  Let $w, v \in W$. Assume that $\HH^{\chi}_i (w) \ne \varnothing$, that $\HH^{\chi}_i (v) \ne \varnothing$, and that $w \KK_i v$. It follows that there exists some $w' \in \HH^{\chi}_i (w)$, implying that $M, w \models \chi$, and $v' \in \HH^{\chi}_i (v)$, implying that $M, v \models \chi$.  Now $w \HH^{\chi}_i v$ follows immediately.\qed
\end{itemize}
\end{proof}

The hope update $\phi$ for an agent $a$ is reminiscent of the refinement semantics of public announcement $\phi$  \cite{jfaketal.jancl:2007}. However, unlike a public announcement, the hope update installs an entirely novel hope relation and discards the old one. 

\subsection{Applications} \label{ssapps}

In this section, we apply the logical semantics just introduced to represent some typical scenarios that occur in FDIR applications. 
We provide several simple two-agent examples. 
\begin{example}[Correction based on agent $b$ having diagnosed $a$ as faulty]
\label{example:bdiagnosesa}
To correct agent $a$ based on $K_{b} H_{a} \bot$, we update agent $a$'s hope relation based on formula $\neg H_a \bot \lor K_{b} H_{a} \bot$ (and agent $b$'s hope relation based on formula $\neg H_{b} \bot$). We recall that the disjunct $\neg H_a \bot$ guarantees that agent~$a$ will stay correct if she already was. The resulting model transformation is:

\begin{center}
\begin{tikzpicture}
\node (00) at (0,0) {$\ourpmb{0}0$};
\node (10) at (2,0) {$1\ourpmb{0}$};
\node (01) at (0,2) {$0\ourpmb{1}$};
\node (11) at (2,2) {$1\ourpmb{1}$};
\draw[-] (00) -- node[above] {$b$} (10);
\draw[-] (01) -- node[above] {$b$} (11);
\draw[-] (00) -- node[left] {$a$} (01);
\draw[-] (10) -- node[right] {$a$} (11);
\node (imp) at (6,1) {$\stackrel {(\neg H_a \bot \lor K_{b} H_{a\mathstrut} \bot,\,\,\,\neg H_b\bot)} \Longrightarrow$};
\node (00r) at (10,0) {$\ourpmb{0}0$};
\node (10r) at (12,0) {$1\ourpmb{0}$};
\node (01r) at (10,2) {$\ourpmb{01}$};
\node (11r) at (12,2) {$\ourpmb{11}$};
\draw[-] (00r) -- node[above] {$b$} (10r);
\draw[-] (01r) -- node[above] {$b$} (11r);
\draw[-] (00r) -- node[left] {$a$} (01r);
\draw[-] (10r) -- node[right] {$a$} (11r);
\end{tikzpicture}
\end{center}

\noindent
After the update, in state~$00$, where $a$ was correct but did not know this, and state~$10$, where $a$~knew she was faulty, we get:
\[\begin{array}{l@{\qquad}l}
\text{At state $00$:} 
\\
M,00 \models [\neg H_a \bot \lor K_{b} H_{a} \bot]_a \neg H_{a} \bot & \text{$a$ remains correct}\\
M,00 \models [\neg H_a \bot \lor K_{b} H_{a} \bot]_a K_{a} \neg H_{a} \bot & \text{$a$ learned that she is correct}\\[.5ex]
\text{At state $10$:} 
\\
M,10 \models [\neg H_a \bot \lor K_{b} H_{a} \bot]_a H_{a} \bot & \text{$a$ is still faulty}\\
M,10 \models [\neg H_a \bot \lor K_{b} H_{a} \bot]_a \M_a \neg H_a \bot & \text{$a$ now considers it possible that she is correct}\\
M,10 \models [\neg H_a \bot \lor K_{b} H_{a} \bot]_a K_b \M_a \neg H_a \bot & \text{$b$ learned that}
\\
&\text{\strut\qquad $a$ considers it possible that she is correct}
\end{array}\]
A straightforward generalization of this hope update is correction based on distributed fault detection, where all agents in some sufficiently large group~$G$ need to diagnose agent $a$ as faulty. If~$G$~is fixed, $\neg H_a \bot \lor E_G H_{a} \bot$ achieves this goal. If any group $G$ of at least $k>1$ agents is eligible, then \looseness=-1  
\[
\neg H_a \bot \lor \bigvee_{G\subseteq \A}^{|G|=k} E_G H_{a} \bot
\] 
is the formula of choice.
\end{example}

\begin{example}
\label{example.selfcorrect}
We provide additional examples illustrating the versatility of our approach:
\begin{enumerate}
\item \emph{Self-correction under constraints.}
Unfortunately, \cref{example:bdiagnosesa} cannot be applied in byzantine settings in general, since \emph{knowledge} of other agents' faults is usually not attainable \cite{KPSF19:FroCos}. Hence, one has to either resort to a weaker belief-based alternative or else to an important special case of \cref{example:bdiagnosesa}, namely, \emph{self-correction}, where $G=\{a\}$, i.e., agent $a$ diagnoses itself as faulty. This remains feasible in the byzantine setting because one's own fault is among the few things an agent can know in such systems \cite{KPSF19:FroCos}. Let us illustrate this.

Self-cor\-rect\-ion of agent $a$ without constraints is carried out on the condition that $a$ knows he is faulty ($K_a H_a \bot$). The hope update formula for self-correction of agent $a$ with an optional additional constraint $\phi$ is 
\[ 
\neg H_{a}\bot \lor (\phi \land K_{a} H_a \bot)
\]
where the $\neg H_{a}\bot$ part corresponds to the worlds where agent $a$ is already correct 
and the $\phi \land K_{a} H_a \bot$ part says that, if he knows that he is faulty ($K_a H_a \bot$), then he attempts to self-correct and succeeds if, additionally, a (possibly external) condition $\phi$ holds. Very similarly to \cref{example:bdiagnosesa} we now add an additional constraint $\phi=p_b$. Notice that the update is indeed slightly different than in \cref{example:bdiagnosesa}, as $a$ no longer becomes correct in world $01$.

\begin{center}
\begin{tikzpicture}
\node (00) at (0,0) {$\ourpmb{0}0$};
\node (10) at (2,0) {$1\ourpmb{0}$};
\node (01) at (0,2) {$0\ourpmb{1}$};
\node (11) at (2,2) {$1\ourpmb{1}$};
\draw[-] (00) -- node[above] {$b$} (10);
\draw[-] (01) -- node[above] {$b$} (11);
\draw[-] (00) -- node[left] {$a$} (01);
\draw[-] (10) -- node[right] {$a$} (11);
\node (imp) at (6,1) {$\stackrel {\bigl(\neg H_{a}\bot \lor (p_b \land K_{a} H_a \bot),\,\,\, \neg H_b \bot\bigr)} \Longrightarrow$};
\node (00r) at (10,0) {$\ourpmb{0}0$};
\node (10r) at (12,0) {$1\ourpmb{0}$};
\node (01r) at (10,2) {$0\ourpmb{1}$};
\node (11r) at (12,2) {$\ourpmb{11}$};
\draw[-] (00r) -- node[above] {$b$} (10r);
\draw[-] (01r) -- node[above] {$b$} (11r);
\draw[-] (00r) -- node[left] {$a$} (01r);
\draw[-] (10r) -- node[right] {$a$} (11r);
\end{tikzpicture}
\end{center}

After the update, in state~$00$, where $a$ was correct but did not know this, and state~$10$, where $a$~knew she was faulty, we get:
\[\begin{array}{l@{\qquad}l}
\text{At state $00$:} \\
M,00 \models [\neg H_{a}\bot \lor (p_b \land K_{a} H_a \bot )]_a \neg H_{a} \bot & \text{$a$ remains correct}\\
M,00 \models [\neg H_{a}\bot \lor (p_b \land K_{a} H_a \bot)]_a \M_a H_{a} \bot & \text{$a$ still considers it possible  she is faulty}\\[.5ex]
\text{At state $10$:} \\
M,10 \models [\neg H_{a}\bot \lor (p_b \land K_{a} H_a \bot)]_a H_{a} \bot & \text{$a$ remains faulty}\\
 M,10 \models [\neg H_{a}\bot \lor (p_b \land K_{a} H_a \bot)]_a \M_a \neg H_a \bot & \text{$a$ now considers it possible she is correct}\\
 M,10 \models [\neg H_{a}\bot \lor (p_b \land K_{a} H_a \bot)]_a K_b  \M_a \neg H_a \bot & \text{$b$ learned that }
 \\
 & \text{\strut\qquad$a$ considers  it possible she is correct}
\end{array}\]
\item \emph{Update with fail-safe behavior.}
This example specifies a 
variant of self-correction where a faulty agent is only made correct when it knows that it is 
faulty. When it considers it possible that it is correct, however, it deliberately fails itself.  This can be viewed as a way to ensure fail-safe behavior in the case of hazardous system states. What is assumed here is that a deliberately failed agent just stops doing anything, i.e., halts, so that it can subsequently be made correct  via another model update, for example.
In order to specify a model update for fail-safe behavior of agent $a$, the hope update formula $K_a H_a \bot$ can be used. 
The resulting model transformation~is:

\begin{center}
\begin{tikzpicture}
\node (00) at (0,0) {$\ourpmb{0}0$};
\node (10) at (2,0) {$1\ourpmb{0}$};
\node (01) at (0,2) {$0\ourpmb{1}$};
\node (11) at (2,2) {$1\ourpmb{1}$};
\draw[-] (00) -- node[above] {$b$} (10);
\draw[-] (01) -- node[above] {$b$} (11);
\draw[-] (00) -- node[left] {$a$} (01);
\draw[-] (10) -- node[right] {$a$} (11);
\node (imp) at (6,1) {$\stackrel {(K_a H_{a} \bot, \,\,\,\neg H_b \bot)} \Longrightarrow$};
\node (00r) at (10,0) {$00$};
\node (10r) at (12,0) {$\ourpmb{10}$};
\node (01r) at (10,2) {$0\ourpmb{1}$};
\node (11r) at (12,2) {$\ourpmb{11}$};
\draw[-] (00r) -- node[above] {$b$} (10r);
\draw[-] (01r) -- node[above] {$b$} (11r);
\draw[-] (00r) -- node[left] {$a$} (01r);
\draw[-] (10r) -- node[right] {$a$} (11r);
\end{tikzpicture}
\end{center}
After the update, in state~$00$, where $a$ was correct but did not know this, and state~$10$, where $a$~knew she was faulty, we get:
\[\begin{array}{l@{\qquad}l}
\text{At state $00$:} \\
M,00 \models [K_a H_a \bot]_a H_{a} \bot & \text{$a$ became faulty}\\
M,00 \models [K_a H_a \bot]_a K_{a} H_{a} \bot & \text{$a$ learned that she is faulty}\\[.5ex]
\text{At state $10$:} \\
M,10 \models [K_a H_a \bot]_a \neg H_{a} \bot & \text{$a$ became correct}\\
M,10 \models [K_a H_a \bot]_a K_a \neg H_a \bot & \text{$a$ now knows that she is correct}\\
M,10 \models [K_a H_a \bot]_a \M_b K_a \neg H_a \bot & \text{$b$ now  considers it possible}
 \\
 & \text{\strut\qquad$a$ knows she is correct}
\end{array}\]

This hope update would fail agent $a$ also in global states where she \emph{knows} that she is correct, which might seem counterintuitive. In fault-tolerant systems with fully byzantine agents, 
this consideration is moot since agents cannot achieve the knowledge of their own correctness anyway~\cite{KPSF19:FroCos}.

\item \emph{Belief-based correction.}
Since it is generally impossible for agent $b\ne a$ to achieve $K_{b}H_{a}\bot$ in byzantine settings~\cite{KPSF19:FroCos}, correction based on knowledge of faults by other agents is  not implementable in practice. What can, in principle, be achieved in such systems is belief $B_b H_a \bot$ of faults of other agents, where belief is defined as $B_{i} \varphi \ce K_{i} (\neg H_{i} \bot \rightarrow \varphi)$
for any agent $i$ and any formula $\varphi$.

To correct agent $b$ based on agent $a$ believing $b$ to be faulty, we update agent $b$'s hope relation based on formula $\neg H_{b} \bot \lor B_{a} H_{b} \bot$. Note that $B_a H_b \bot$ is indeed initially true in world $00$: if~$a$~is correct, namely only in state $00$ (and not in state $01$), then $b$ is incorrect.

\begin{center}
\begin{tikzpicture}
\node (00) at (0,0) {$\ourpmb{0}0$};
\node (10) at (2,0) {$1\ourpmb{0}$};
\node (01) at (0,2) {$0\ourpmb{1}$};
\node (11) at (2,2) {$1\ourpmb{1}$};
\draw[-] (00) -- node[above] {$b$} (10);
\draw[-] (01) -- node[above] {$b$} (11);
\draw[-] (00) -- node[left] {$a$} (01);
\draw[-] (10) -- node[right] {$a$} (11);
\node (imp) at (6,1) {$\stackrel {(\neg H_a \bot,\,\,\, \neg H_{b} \bot \lor B_{a} H_{b} \bot)} \Longrightarrow$};
\node (00r) at (10,0) {$\ourpmb{00}$};
\node (10r) at (12,0) {$1\ourpmb{0}$};
\node (01r) at (10,2) {$0\ourpmb{1}$};
\node (11r) at (12,2) {$1\ourpmb{1}$};
\draw[-] (00r) -- node[above] {$b$} (10r);
\draw[-] (01r) -- node[above] {$b$} (11r);
\draw[-] (00r) -- node[left] {$a$} (01r);
\draw[-] (10r) -- node[right] {$a$} (11r);
\end{tikzpicture}
\end{center}
After the update, in state~$00$, where $b$ was faulty but did not know this, and state~$10$, where $b$~was correct but did not know this, we get:
\[\begin{array}{l@{\qquad}l}
\text{At state $00$:}\\
M,00 \models [\neg H_{b} \bot \lor B_{a} H_{b} \bot]_b \neg H_{b} \bot & \text{$b$ became correct}\\
M,00 \models [\neg H_{b} \bot \lor B_{a} H_{b} \bot]_b K_{b} \neg H_{b} \bot & \text{$b$ learned that he is correct}\\
M,00 \models [\neg H_{b} \bot \lor B_{a} H_{b} \bot]_b \neg B_{a} H_{b} \bot & \text{$a$ no longer believes that $b$ is faulty}\\[.5ex]
\text{At state $01$:}\\
M,01 \models [\neg H_{b} \bot \lor B_{a} H_{b} \bot]_b K_{a} \neg H_{b} \bot & \text{$a$ learned that $b$ is correct}\\
M,01 \models [\neg H_{b} \bot \lor B_{a} H_{b} \bot]_b K_{b} K_{a} \neg H_{b} \bot & \text{$b$ learned that $a$ knows that $b$ is correct}
\end{array}
\]
Agent $b$ is now correct in all states. Agents $a$ and $b$ therefore have common knowledge that $b$~is correct.

\end{enumerate}
\end{example}

\paragraph*{Byzantine agents.}
We now turn our attention to a different problem that needs to be solved in fault-tolerant distributed systems like MAFT~\cite{KWFT88} and GUARDS~\cite{PABB99} that combine fault-masking approaches with FDIR. What is needed here is to monitor whether there are at most $f$ faulty agents among the $n$ agents in the system, and take countermeasures when the formula 
\[ 
\byzf \ce \Vel_{\substack{G \subseteq \A\\|G|=n-f}} \Et_{i \in G} \neg H_i \bot
\]
is in danger of getting violated or even is violated already.
The most basic way to enforce the global condition $\byzf$ in a hope update is by a constraint on the hope update formulas, rather than by their actual shape. All that is needed here is to ensure, given hope update formulas $\vec{\phi} = (\phi_1,\dots,\phi_n)$, that at least $n-f$ of those are true, which can be expressed by the formula
\[ 
{\vec{\phi}}^{\,n-f} \ce \Vel_{\substack{G \subseteq \A\\|G|=n-f}} \Et_{i \in G} \phi_i.
\] 
We now have the validity 
\[ \models {\vec{\phi}}^{\,n-f} \imp [\vec{\phi}]\byzf. \]
In particular, we also have the weaker 
\[
\models \byzf \et \vec{\phi}^{\,n-f} \imp [\vec{\phi}]\byzf.
\] 
In other words,  
\[
M,w \models \byzf \et \vec{\phi}^{\,n-f} \quad\text{implies}\quad M^{\vec{\phi}},w \models \byzf
.\]
We could also consider generalized schemas such as: $M \models \byzf \et \vec{\phi}^{\,n-f}$ implies $M^{\vec{\phi}} \models \byzf$. In~all these cases, the initial assumption $\byzf$ is superfluous. 

Such a condition is, of course, too abstract for practical purposes. What would be needed here are concrete hope update formulas by which we can update a model when $\byzf$~might become false resp.\ is false already, in which case it must cause the correction of sufficiently many agents to guarantee that $\byzf$ is still true resp.\ becomes true again after the update. Recall that belief~$B_i\psi$ is defined as $K_i(\neg H_i \bot \to \psi)$.
If we define
\[ B_{\geq{f}}\psi \ce \bigvee_{\substack{G\subseteq \A\\|G|= f}} \bigwedge_{i\in G} B_i \psi, \]
it easy to see by the pigeonhole principle that 
\[
\models \byzf \et B_{\geq{f+1}}\psi \imp\psi.
\]
Using 
$\psi = H_a\bot$ will hence result in one fewer faulty agent. To the formula $B_{\geq{f+1}}H_a\bot$ we add a disjunct $\neg H_a \bot$ to ensure correct agents remain correct.
\[ \models \byzf \et B_{\geq{f+1}}H_a\bot \imp [\neg H_a \bot \lor B_{\geq{f+1}}H_a\bot]_a \byz_{\!{f{-}1}}. \]

\subsection{Axiomatization}

Axiomatization $\axKH^{\textit{pub}}$ of the logical semantics for $\lanhopekn^{\textit{pub}}$ extends axiom system~$\axKH$ with axioms describing the interaction between hope updates and other logical connectives. The axiomatization is a straightforward reduction system, where the interesting interaction happens in  hope update binding hope.

\begin{definition}[Axiomatization $\axKH^{\textit{pub}}$]
\label{axiomatization}
$\axKH^{\textit{pub}}$ extends $\axKH$ with axioms
\[\begin{array}{ll@{\quad}l}
{[\vec{\varphi}]} p &\eq p & {[\vec{\varphi}]} K_i \psi \eq K_i [\vec{\varphi}]\psi \\
{[\vec{\varphi}]} \neg\psi &\eq \neg[\vec{\varphi}]\psi & {[\vec{\varphi}]} H_i \psi \eq \bigl(\phi_i \rightarrow K_i(\phi_i \rightarrow [\vec{\varphi}] \psi)\bigr)\\[.3ex]
{[\vec{\varphi}]} (\psi\et\xi) &\eq [\vec{\varphi}]\psi \et [\vec{\varphi}]\xi \quad & {[\vec{\varphi}]} [\vec{\chi}] \psi \eq \bigl[[\vec{\varphi}] \chi_{1}, \dots, [\vec{\varphi}] \chi_{n}\bigr] \psi\\
\end{array}\]
where $\vec{\varphi}=(\varphi_1,\dots,\varphi_n) \in (\lanhopekn^{\textit{pub}})^{n}$, $\vec{\chi}=(\chi_1,\dots,\chi_n) \in (\lanhopekn^{\textit{pub}})^{n}$, $\psi, \xi \in \lanhopekn^{\textit{pub}}$, $p \in \Prop$, and $i \in \mathcal{A}$.
\end{definition}

\begin{theorem}[Soundness]
\label{theorem:soundness}
For all $\phi \in \lanhopekn^{\textit{pub}}$, $
\axKH^{\textit{pub}} \vdash \phi$ implies $\KK\HH \models \phi$.
\end{theorem}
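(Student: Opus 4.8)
The plan is to proceed by induction on the length of a derivation in $\axKH^{\textit{pub}}$. The base case splits into two families of axioms: those inherited from $\axKH$, and the six new reduction axioms. For the inherited axioms and the rules $\MP$, $\Nec^K$, soundness follows directly from \cref{theorem:sckh}, provided we observe that truth in the new larger language is still evaluated pointwise at worlds of a model in $\KK\HH$; crucially, by \cref{proposition:updatedmodelstaysinkh} the updated model $M^{\vec\varphi}$ again lies in $\KK\HH$, so every semantic clause is well-defined and the old axioms remain valid when $\psi,\xi$ range over the extended language. Hence the real work is to verify that each of the six new equivalences is valid, i.e.\ that for every $M=(W,\pi,\KK,\HH)\in\KK\HH$ and every $w\in W$, both sides agree at $w$.

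First I would dispatch the four ``easy'' reduction axioms. For ${[\vec\varphi]}p\eq p$ I unfold the semantics: $M,w\models[\vec\varphi]p$ iff $M^{\vec\varphi},w\models p$ iff $w\in\pi(p)$ (the valuation is unchanged by the update) iff $M,w\models p$. The axioms for $\neg$ and $\et$ follow because the update commutes with the boolean structure: $M^{\vec\varphi},w\models\neg\psi$ iff $M^{\vec\varphi},w\not\models\psi$, and similarly for conjunction, so both reduce to the semantics of $[\vec\varphi]$ applied to the subformulas. For ${[\vec\varphi]}K_i\psi\eq K_i[\vec\varphi]\psi$ the key point is that the update does not touch the knowledge relation $\KK_i$, so the accessible worlds are literally the same before and after: $M^{\vec\varphi},w\models K_i\psi$ iff $M^{\vec\varphi},v\models\psi$ for all $v$ with $w\KK_iv$, iff $M,v\models[\vec\varphi]\psi$ for all such $v$, iff $M,w\models K_i[\vec\varphi]\psi$.

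The main obstacle, as the authors themselves flag, is the axiom ${[\vec\varphi]}H_i\psi\eq\bigl(\phi_i\rightarrow K_i(\phi_i\rightarrow[\vec\varphi]\psi)\bigr)$, where the update acts precisely on the relation $\HH_i$ being interpreted. The plan is to compute $M^{\vec\varphi},w\models H_i\psi$ directly from the semantics of $\HH^{\phi_i}_i$: this holds iff $M^{\vec\varphi},v\models\psi$ for all $v$ with $w\HH^{\phi_i}_iv$, and by definition $w\HH^{\phi_i}_iv$ means $w\KK_iv$, $M,w\models\phi_i$, and $M,v\models\phi_i$. I would split on whether $M,w\models\phi_i$. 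If $M,w\not\models\phi_i$, then $\HH^{\phi_i}_i(w)=\varnothing$, so the left side holds vacuously, while the right side holds because its antecedent $\phi_i$ is false; the two sides match. If $M,w\models\phi_i$, then $w\HH^{\phi_i}_iv$ reduces to $w\KK_iv$ together with $M,v\models\phi_i$, so the left side becomes ``for all $v$ with $w\KK_iv$ and $M,v\models\phi_i$, $M^{\vec\varphi},v\models\psi$,'' which, using $M,v\models[\vec\varphi]\psi$ iff $M^{\vec\varphi},v\models\psi$, is exactly $M,w\models K_i(\phi_i\rightarrow[\vec\varphi]\psi)$, matching the right side since the antecedent $\phi_i$ is now true at $w$. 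Finally, the composition axiom ${[\vec\varphi]}[\vec\chi]\psi\eq\bigl[[\vec\varphi]\chi_1,\dots,[\vec\varphi]\chi_n\bigr]\psi$ requires showing that successively updating by $\vec\varphi$ then $\vec\chi$ yields the same hope relations as a single update by the formulas $[\vec\varphi]\chi_i$; here I would verify $(M^{\vec\varphi})^{\vec\chi}=M^{([\vec\varphi]\chi_1,\dots,[\vec\varphi]\chi_n)}$ by comparing hope relations, using that the second update reads off truth of $\chi_i$ in $M^{\vec\varphi}$, which by semantics equals truth of $[\vec\varphi]\chi_i$ in $M$, and that $\KK_i$ and $\pi$ are untouched throughout. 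Once the model identity is established the equivalence is immediate.
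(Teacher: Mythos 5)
Your proposal is correct and takes essentially the same route as the paper's proof: validity of the inherited axioms and rules is delegated to \cref{theorem:sckh}, and each of the six reduction axioms is verified semantically, with the decisive steps being the analysis of $\HH^{\phi_i}_i$ for the hope axiom (your case split on $M,w\models\phi_i$ is just the curried form of the paper's chain of equivalences) and the model identity $(M^{\vec{\varphi}})^{\vec{\chi}}=M^{([\vec{\varphi}]\chi_1,\dots,[\vec{\varphi}]\chi_n)}$ for the composition axiom. The only cosmetic difference is that you spell out the induction on derivations and the role of \cref{proposition:updatedmodelstaysinkh} in keeping the semantics well-defined for iterated updates, which the paper compresses into ``it is sufficient to show the validity of the new axioms.''
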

\begin{proof}
In light of \cref{theorem:sckh}, it is sufficient to show the validity of the new axioms. More precisely, we consider an arbitrary model $M = (W,\pi,\KK,\HH) \in \KK\HH$ and state $w \in W$ and show that each axiom is true in state $w$:
\begin{itemize}
\item  
Axiom ${[\vec{\varphi}]} p \eq p$ is valid because
\\
$M, w \models [\vec{\varphi}] p$%
\quad if{f}\quad 
$M^{\vec{\varphi}}, w \models p$%
\quad if{f}\quad 
$w \in \pi(p)$%
\quad if{f}\quad 
$M, w \models p$.
\item 
Axiom  ${[\vec{\varphi}]} \neg\psi \eq \neg[\vec{\varphi}]\psi$ is valid because
\\
$M, w \models [\vec{\varphi}] \neg \psi$%
\quad if{f}\quad 
$M^{\vec{\varphi}}, w \models \neg \psi$%
\quad if{f}\quad 
$M^{\vec{\varphi}}, w \not\models \psi$%
\quad if{f}\quad 
$M, w \not\models [\vec{\varphi}] \psi$%
\quad if{f}\quad 
$M, w \models \neg [\vec{\varphi}] \psi$.
\item 
Axiom  ${[\vec{\varphi}]} (\psi\et\xi) \eq [\vec{\varphi}]\psi \et [\vec{\varphi}]\xi$ is valid because
\\
$M, w \models [\vec{\varphi}] (\psi \land \xi)$%
\quad if{f}\quad 
$M^{\vec{\varphi}}, w \models \psi \land \xi$%
\quad if{f}\quad 
$M^{\vec{\varphi}}, w \models \psi$ and $M^{\vec{\varphi}}, w \models \xi$%
\quad if{f} \\
$M, w \models [\vec{\varphi}] \psi$ and $M, w \models [\vec{\varphi}] \xi$%
\quad if{f}\quad 
$M, w \models [\vec{\varphi}] \psi \land [\vec{\varphi}] \xi$.
\item 
Axiom  ${[\vec{\varphi}]} K_i \psi \eq K_i [\vec{\varphi}]\psi$ is valid because
\\
$M, w \models [\vec{\varphi}] K_i \psi$%
\quad if{f}\quad 
$M^{\vec{\varphi}}, w \models K_i \psi$%
\quad if{f}\quad 
$\bigl(\forall v \in \KK_i (w)\bigr)\, M^{\vec{\varphi}}, v \models \psi$%
\quad if{f} \\ 
$\bigl(\forall v \in \KK_i (w)\bigr)\, M, v \models [\vec{\varphi}] \psi$%
\quad if{f}\quad 
$M, w \models K_i [\vec{\varphi}] \psi$.
\item
Axiom  
${[\vec{\varphi}]} H_i \psi \eq \bigl(\phi_i \rightarrow K_i(\phi_i \rightarrow [\vec{\varphi}] \psi)\bigr)$ is valid because \\
$M, w \models [\vec{\varphi}] H_i \psi$%
\quad if{f}\quad
$M^{\vec{\varphi}}, w \models H_i \psi$%
\quad if{f}\quad
$\bigl(\forall v \in \HH^{\phi_i}_i (w)\bigr)\,\, M^{\vec{\varphi}}, v \models \psi$\quad if{f}\\ 
$(\forall v \in W)
\bigl(v \in \KK_i (w) \,\,\&\,\, M, w \models \phi_i  \,\,\&\,\, M, v \models \phi_i \quad\Longrightarrow\quad M^{\vec{\varphi}}, v \models \psi\bigr)$\quad if{f}\\ 
$M, w \models \phi_i  \quad \Longrightarrow\quad (\forall v \in W)
\bigl(v \in \KK_i (w)   \,\,\&\,\, M, v \models \phi_i \quad\Longrightarrow\quad M^{\vec{\varphi}}, v \models \psi\bigr)$\,\, if{f}\\ 
$M, w \models \phi_i  \quad \Longrightarrow\quad \bigl(\forall v \in \KK_i (w)\bigr)
(   M, v \models \phi_i \quad\Longrightarrow\quad M^{\vec{\varphi}}, v \models \psi)$\,\, if{f}\\ 
$M, w \models \phi_i\quad\Longrightarrow\quad\bigl(\forall v \in \KK_i (w)\bigr)(M, v \models \phi_i\quad\Longrightarrow\quad M, v \models [\vec{\varphi}] \psi)$\quad if{f}\\ 
$M, w \models \phi_i\quad\Longrightarrow\quad\bigl(\forall v \in \KK_i (w)\bigr)\,\,M, v \models \phi_i \rightarrow [\vec{\varphi}] \psi$\quad if{f}\\ 
$M, w \models \phi_i\quad\Longrightarrow\quad M, w \models  K_i(\phi_i \rightarrow [\vec{\varphi}] \psi)$\quad if{f}\\ 
$M, w \models \phi_i \rightarrow K_i(\phi_i \rightarrow [\vec{\varphi}] \psi).$
\item To show the validity of axiom ${[\vec{\varphi}]} [\vec{\chi}] \psi \eq \bigl[[\vec{\varphi}] \chi_{1}, \dots, [\vec{\varphi}] \chi_{n}\bigr] \psi$, we first  show that
\[
(M^{\vec{\varphi}})^{\vec{\chi}}=M^{\bigl([\vec{\varphi}] \chi_{1}, \dots, [\vec{\varphi}] \chi_{n}\bigr)}.
\] 
Since domain $W$, valuation $\pi$, and accessibility relations $\KK_{i}$ for all $i \in \mathcal{A}$ are the same in the initial model $M$ and all updated models, we  only need to show that every agent $i$'s hope accessibility relation~$\bigl((\HH^{\vec{\phi}})^{\vec{\chi}}\bigr)_i= (\HH^{\vec{\phi}})^{\chi_i}_i$   from model $(M^{\vec{\varphi}})^{\vec{\chi}}$ coincides  with $i$'s~hope accessibility relation~$\HH^{[\vec{\varphi}] \chi_{i}}_{i}$ from model $M^{\bigl([\vec{\varphi}] \chi_{1}, \dots, [\vec{\varphi}] \chi_{n}\bigr)}$:\\
$w(\HH^{\vec{\phi}})^{\chi_i}_i v$%
\qquad\quad if{f}\qquad\quad 
$w\KK_{i}v$, and $M^{\vec{\varphi}}, w \models \chi_{i}$, and $M^{\vec{\varphi}}, v \models \chi_{i}$%
\qquad\quad if{f}\\  
\strut\hfill
$w\KK_{i}v$, and $M, w \models [\vec{\varphi}] \chi_{i}$, and $M, v \models [\vec{\varphi}] \chi_{i}$%
\qquad\quad if{f}\qquad\quad 
$w \HH^{[\vec{\varphi}]\chi_{i}}_{i} v$.\\ 
It remains to note that $M, w \models  [\vec{\varphi}] [\vec{\chi}] \psi$%
\quad if{f}\quad
$M^{\vec{\varphi}}, w \models [\vec{\chi}] \psi$%
\quad if{f}\quad
$(M^{\vec{\varphi}})^{\vec{\chi}}, w \models \psi$%
\quad if{f}\\ 
$M^{\bigl([\vec{\varphi}] \chi_{1}, \dots, [\vec{\varphi}] \chi_{n}\bigr)}, w \models \psi$%
\quad if{f}\quad 
$M, w \models \bigl[[\vec{\varphi}] \chi_{1}, \dots, [\vec{\varphi}] \chi_{n}\bigr] \psi$.
\qed
\end{itemize}

\end{proof}

Every formula in $\lanhopekn^{\textit{pub}}$ is provably equivalent to a formula in $\lanhopekn$ (\cref{lem:trans}).  
To prove this, we first define the \emph{weight} or \emph{complexity} of a given formula (\cref{definition:weight}) and show a number of inequalities comparing the left-hand side to the right-hand side of the reduction axioms in axiomatization $\axKH^{\textit{pub}}$ (\cref{unequal}). Subsequently, we define a translation from $\lanhopekn^{\textit{pub}}$ to $\lanhopekn$ (\cref{translation}) and finally  show that the translation is a terminating rewrite procedure (\cref{termination}).

\begin{definition}
\label{definition:weight}
The \emph{complexity} $c: \lanhopekn^{\textit{pub}} \to \mathbb{N}$ of  $\lanhopekn^{pub}$-formulas is defined recursively, where $p \in \Prop$, $i \in \mathcal{A}$, and $c(\vec{\phi}) \ce \max \{ c(\phi_i) \mid 1 \leq i \leq n \}$:  
\[\begin{array}{lll}
c(p) &\ce 1 & c(K_{i} \phi) \ce c(\phi) + 1\\
c(\neg \phi) &\ce c(\phi) + 1 & c(H_{i} \phi) \ce c(\phi) + 4\\
c(\phi \land \xi) &\ce \max\{c(\phi),c(\xi)\} + 1 \quad & c\bigl([\vec{\phi}] \xi\bigr) \ce \bigl(c(\vec{\phi}) + 1\bigr) \cdot c(\xi)
\end{array}\]
\end{definition}

\begin{lemma} 
\label{unequal}
For each axiom $\theta_l \eq \theta_r$ from \cref{axiomatization}, $c(\theta_l) > c(\theta_r)$.
\end{lemma}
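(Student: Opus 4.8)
The plan is to prove the lemma by a direct case analysis over the six reduction axioms of \cref{axiomatization}, computing $c(\theta_l)$ and $c(\theta_r)$ explicitly via \cref{definition:weight} and checking the strict inequality in each case. Before starting, I would record one elementary fact by induction on formula structure: every $\lanhopekn^{\textit{pub}}$-formula $\psi$ satisfies $c(\psi)\geq 1$, and consequently $c(\vec\varphi)=\max_i c(\varphi_i)\geq 1$ for any update vector. Writing $a\ce c(\vec\varphi)$, the lower bound $a\geq 1$ is exactly what drives every inequality below. It is also convenient to abbreviate $X\ce [\vec\varphi]\psi$, so that $c(X)=(a+1)\,c(\psi)$.

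For the four ``easy'' axioms (those with $p$, $\neg$, $\wedge$, and $K_i$ as the principal connective under the update) I would simply unfold $c$. The point is that $[\vec\varphi]$ multiplies complexity by the factor $a+1$ when it sits outermost on the left, whereas on the right it is applied only to proper subformulas. Concretely, for $[\vec\varphi]\neg\psi\eq\neg[\vec\varphi]\psi$ one gets $c(\theta_l)=(a+1)(c(\psi)+1)$ and $c(\theta_r)=(a+1)c(\psi)+1$, a gap of $a>0$; the cases for $p$, $\wedge$, and $K_i$ all produce the same gap $a$, where for $\wedge$ one first pushes the outer $\max$ through the common positive factor $a+1$, i.e.\ $\max\{(a+1)c(\psi),(a+1)c(\xi)\}=(a+1)\max\{c(\psi),c(\xi)\}$.

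The genuinely delicate case is the hope axiom $[\vec\varphi]H_i\psi\eq\bigl(\phi_i\rightarrow K_i(\phi_i\rightarrow[\vec\varphi]\psi)\bigr)$, and this is where I expect the real work. On the left, $c(\theta_l)=(a+1)(c(\psi)+4)=(a+1)c(\psi)+4a+4$, so the weight $4$ assigned to $H_i$ contributes the crucial surplus $4a+4$. On the right I would expand the abbreviation $\rightarrow$ (say $\alpha\rightarrow\beta\ce\neg(\alpha\wedge\neg\beta)$) and compute from the inside out. The key observation is that at each nesting level the subterm containing $X=[\vec\varphi]\psi$ dominates $c(\phi_i)\leq a$ inside the relevant $\max$, since $c(X)+1=(a+1)c(\psi)+1\geq a+2>a\geq c(\phi_i)$, using $c(\psi)\geq 1$; the same domination recurs one layer out. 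Carrying this through yields $c(\theta_r)=(a+1)c(\psi)+7$ (and one should note that this value is insensitive to the chosen abbreviation for $\rightarrow$, because the $X$-bearing subterm always wins the $\max$). The inequality then reduces to $4a+4>7$, i.e.\ $4a-3>0$, which holds precisely because $a\geq 1$; the weight $4$ on $H_i$ is tuned exactly so that one application of $[\vec\varphi]$ pays for the wrapping $\phi_i\rightarrow K_i(\phi_i\rightarrow\cdot)$.

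Finally, for the composition axiom $[\vec\varphi][\vec\chi]\psi\eq\bigl[[\vec\varphi]\chi_1,\dots,[\vec\varphi]\chi_n\bigr]\psi$ I would use multiplicativity directly. With $d\ce c(\vec\chi)$ the left side is $(a+1)(d+1)\,c(\psi)$, while on the right the new update vector has complexity $\max_j c([\vec\varphi]\chi_j)=(a+1)\max_j c(\chi_j)=(a+1)d$, giving $\bigl((a+1)d+1\bigr)c(\psi)$; the difference is $a\cdot c(\psi)>0$ since $a\geq 1$ and $c(\psi)\geq 1$. The main obstacle throughout is essentially bookkeeping: ensuring the $\max$ operations in the $\wedge$-case and in the nested implications of the hope axiom all resolve in favour of the $[\vec\varphi]$-laden subterm, which is guaranteed uniformly by the bound $c(\psi)\geq 1$.
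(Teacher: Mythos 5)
Your proposal is correct and follows essentially the same route as the paper's proof: a case analysis over the six reduction axioms, using multiplicativity of $c$ on updates, pushing $\max$ through the positive factor $c(\vec\varphi)+1$ in the conjunction and composition cases, and expanding $\alpha\rightarrow\beta$ as $\neg(\alpha\wedge\neg\beta)$ in the hope case to obtain $c(\theta_l)=(a+1)c(\psi)+4a+4 > (a+1)c(\psi)+7=c(\theta_r)$. Your explicit preliminary observation that $c(\psi)\geq 1$ (hence $c(\vec\varphi)\geq 1$ and the $[\vec\varphi]$-laden subterm always wins the relevant $\max$) is exactly the fact the paper uses implicitly, so the two arguments coincide in substance.
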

\begin{proof}
\begin{itemize}
\item 
For  axiom $ {[\vec{\varphi}]} p \eq p$:
\[
c\bigl([\vec{\varphi}] p\bigr) =\bigl(c(\vec{\varphi}) + 1\bigr) \cdot c(p) 
>c(p).
\]
\item 
For axiom ${[\vec{\varphi}]} \neg\psi \eq \neg[\vec{\varphi}]\psi$:
\begin{multline*}
c\bigl([\vec{\varphi}] \neg \psi\bigr) =\bigl(c(\vec{\varphi}) + 1\bigr) \cdot c(\neg \psi) 
= \bigl(c(\vec{\varphi}) + 1\bigr) \cdot \bigl(c(\psi) + 1\bigr) \\
>\bigl(c(\vec{\varphi}) + 1\bigr) \cdot c(\psi) + 1 
=c\bigl([\vec{\varphi}] \psi\bigr) + 1 
=c\bigl(\neg [\vec{\varphi}] \psi\bigr).
\end{multline*}
\item 
For axiom ${[\vec{\varphi}]} (\psi\et\xi) \eq [\vec{\varphi}]\psi \et [\vec{\varphi}]\xi$:
\begin{multline*}
c\bigl([\vec{\varphi}] (\psi \land \xi)\bigr) =\bigl(c(\vec{\varphi}) + 1\bigr) \cdot c(\psi \land \xi) 
=\bigl(c(\vec{\varphi}) + 1\bigr) \cdot \bigl(\max\{c(\psi),c(\xi)\} + 1\bigr) \\
>\bigl(c(\vec{\varphi}) + 1\bigr) \cdot \max\{c(\psi),c(\xi)\} + 1 
=\max\Bigl\{\bigl(c(\vec{\varphi}) +1\bigr) \cdot c(\psi),\,\,\bigl(c(\vec{\varphi}) +1\bigr) \cdot c(\xi)\Bigr\} + 1 \\
=\max\Bigl\{c\bigl([\vec{\varphi}] \psi\bigr),\,\,c\bigl([\vec{\varphi}] \xi\bigr)\Bigr\} + 1 
=c\bigl([\vec{\varphi}] \psi \land [\vec{\varphi}] \xi\bigr).
\end{multline*}
\item
For axiom ${[\vec{\varphi}]} K_i \psi \eq K_i [\vec{\varphi}]\psi$:
\begin{multline*}
c\bigl([\vec{\varphi}] K_{i}\psi\bigr) =\bigl(c(\vec{\varphi}) + 1\bigr) \cdot c(K_{i}\psi) 
= \bigl(c(\vec{\varphi}) + 1\bigr) \cdot \bigl(c(\psi) + 1\bigr) \\
>\bigl(c(\vec{\varphi}) + 1\bigr) \cdot c(\psi) + 1 
=c\bigl([\vec{\varphi}] \psi\bigr) + 1 
=c\bigl(K_{i}[\vec{\varphi}] \psi\bigr).
\end{multline*}
\item
For axiom ${[\vec{\varphi}]} H_i \psi \eq \bigl(\phi_i \rightarrow K_i(\phi_i \rightarrow [\vec{\varphi}] \psi)\bigr)$, given  that 
\[
\phi_i \rightarrow K_i(\phi_i \rightarrow [\vec{\varphi}] \psi) 
\quad=\quad 
\neg \bigl(\phi_i \et \neg K_i\neg(\phi_i \et \neg [\vec{\varphi}] \psi)\bigr):
\]
\begin{multline*}
c\bigl([\vec{\varphi}] H_{i}\psi\bigr) =\bigl(c(\vec{\varphi}) + 1\bigr) \cdot c(H_{i}\psi) 
=\bigl(c(\vec{\varphi}) + 1\bigr) \cdot \bigl(c(\psi) + 4\bigr) 
=\bigl(c(\vec{\varphi}) + 1\bigr) \cdot c(\psi) + \bigl(c(\vec{\varphi}) + 1\bigr) \cdot 4 \\
=c\bigl([\vec{\varphi}] \psi\bigr) + 4 \cdot c(\vec{\varphi}) + 4 
>c\bigl([\vec{\varphi}] \psi\bigr)+ 7 
=\max\Bigl\{c(\varphi_{i}),\,\, c\bigl([\vec{\varphi}] \psi\bigr)+ 5\Bigr\} + 2 \\
=\max\Bigl\{c(\varphi_{i}),\quad\max\bigl\{c(\varphi_{i}),\,c\bigl([\vec{\varphi}] \psi\bigr)+ 1\bigr\} + 4\Bigr\} + 2 \\
=\max\Bigl\{c(\varphi_{i}),\quad\max\bigl\{c(\varphi_{i}),\,c\bigl(\neg [\vec{\varphi}] \psi\bigr)\bigr\} + 4\Bigr\} + 2 \\
=\max\Bigl\{c(\varphi_{i}),\,\, c\bigl(\varphi_{i} \land \neg [\vec{\varphi}] \psi\bigr) + 3\Bigr\} + 2 \\
=\max\Bigl\{c(\varphi_{i}),\,\,c\bigl(\neg (\varphi_{i} \land \neg [\vec{\varphi}] \psi)\bigr) + 2\Bigr\} + 2 \\
=\max\Bigl\{c(\varphi_{i}),\,\,c\bigl(K_{i}\neg (\varphi_{i} \land \neg [\vec{\varphi}] \psi)\bigr) + 1\Bigr\} + 2 \\
=\max\Bigl\{c(\varphi_{i}),\,\,c\bigl(\neg K_{i}\neg (\varphi_{i} \land \neg [\vec{\varphi}] \psi)\bigr)\Bigr\} + 2 \\
=c\bigl(\varphi_{i} \land \neg K_{i}\neg (\varphi_{i} \land \neg [\vec{\varphi}] \psi)\bigr) + 1 
=c\Bigl(\neg \bigl(\varphi_{i} \land \neg K_{i}\neg (\varphi_{i} \land \neg [\vec{\varphi}] \psi)\bigr)\Bigr).
\end{multline*}
\item For axiom ${[\vec{\varphi}]} [\vec{\chi}] \psi \eq \bigl[[\vec{\varphi}] \chi_{1}, \dots, [\vec{\varphi}] \chi_{n}\bigr] \psi$:
\begin{multline*}
c\bigl([\vec{\varphi}] [\vec{\chi}] \psi\bigr) =\bigl(c(\vec{\varphi}) + 1\bigr) \cdot c\bigl([\vec{\chi}] \psi\bigr) 
=\bigl(c(\vec{\varphi}) + 1\bigr) \cdot \bigl(c(\vec{\chi}) + 1\bigr) \cdot c(\psi) 
>\Bigl(\bigl(c(\vec{\varphi}) + 1\bigr) \cdot c(\vec{\chi}) + 1 \Bigr) \cdot c(\psi) \\
= \Bigl(\max\bigl\{\bigl(c(\vec{\varphi}) + 1\bigr) \cdot c(\chi_{1}), \dots, \bigl(c(\vec{\varphi}) + 1\bigr) \cdot c(\chi_{n})\bigr\} + 1\Bigr) \cdot c(\psi) \\
=\Bigl(\max\bigl\{c\bigl([\vec{\varphi}] \chi_{1}\bigr), \dots, c\bigl([\vec{\varphi}] \chi_{n}\bigr)\bigr\} + 1\Bigr) \cdot c(\psi) 
=c\Bigl(\bigl[[\vec{\varphi}] \chi_{1}, \dots, [\vec{\varphi}] \chi_{n}\bigr] \psi\Bigr).\quad
\qed
\end{multline*}
\end{itemize}
\end{proof}

\begin{definition}
\label{translation}
The \emph{translation} $t: \lanhopekn^{pub} \to \lanhopekn$  is defined recursively, where $p \in \Prop$, $i \in \mathcal{A}$, and the $i$th formula of $\vec{\phi}$ is $\phi_i$:
\[
\begin{array}{llll}
t(p) &\ce p & t\bigl([\vec{\phi}]p\bigr) &\ce p \\[.3ex]
t(\neg \phi) &\ce \neg t(\phi) & t\bigl([\vec{\phi}]\neg\xi\bigr) &\ce \neg t\bigl([\vec{\phi}]\xi\bigr) \\[.3ex]
t(\phi \land \xi) &\ce t(\phi) \land t(\xi) \quad & t\bigl([\vec{\phi}] (\xi\et\chi)\bigr) &\ce t\bigl([\vec{\phi}]\xi \et [\vec{\phi}]\chi\bigr) \\[.3ex]
t(K_{i} \phi) &\ce K_i t(\phi) & t\bigl([\vec{\phi}] K_i \xi\bigr) &\ce t\bigl(K_i [\vec{\phi}]\xi\bigr) \\[.3ex]
t(H_{i} \phi) &\ce H_i t(\phi) & t\bigl([\vec{\phi}] H_i \xi\bigr) &\ce t\bigl(\phi_i \rightarrow K_i(\phi_i \rightarrow [\vec{\phi}] \xi)\bigr) \\[.3ex]
& & t\bigl([\vec{\phi}] [\chi_{1}, \dots, \chi_{n}] \xi\bigr) &\ce t\bigl(\bigl[[\vec{\phi}] \chi_{1}, \dots, [\vec{\phi}] \chi_{n}\bigr] \xi\bigr)
\end{array}
\]
\end{definition}

\begin{proposition}[Termination]
\label{termination}
For all $\phi \in \lanhopekn^{\textit{pub}}$, $t(\phi) \in \lanhopekn$.
\end{proposition}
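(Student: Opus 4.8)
The plan is to prove the statement by well-founded induction on the complexity measure $c$ from \cref{definition:weight}, establishing simultaneously that $t(\phi)$ is defined (i.e.\ that the rewriting terminates) and that the resulting formula contains no hope-update modality, hence lies in $\lanhopekn$. First I would observe that the defining clauses of $t$ in \cref{translation} perform an exhaustive and mutually exclusive case analysis on the outermost structure of $\phi$: either $\phi$ is one of $p$, $\neg\xi$, $\xi\land\chi$, $K_i\xi$, $H_i\xi$ (the static cases), or $\phi$ has the shape $[\vec{\phi}]\xi$, in which case the applicable clause is further selected by the main connective of $\xi$ (again one of the six forms). Thus every formula matches exactly one clause, and the only things left to verify are that the recursion is well founded and that its range is correct.

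For the static clauses, each recursive call is on a proper subformula, whose complexity is strictly smaller than that of $\phi$ directly by the defining equations of $c$ (for instance $c(\xi) < c(H_i\xi) = c(\xi)+4$). By the induction hypothesis each such $t(\cdot)$ is defined and lies in $\lanhopekn$, and since $\lanhopekn$ is closed under $\neg$, $\land$, $K_i$, and $H_i$, rebuilding with the corresponding connective keeps the result in $\lanhopekn$; the base case $t(p)=p\in\lanhopekn$ is immediate. For the update clauses, the recursive call is applied to the right-hand side $\theta_r$ of the matching reduction axiom $\theta_l\eq\theta_r$ of \cref{axiomatization}, where $\theta_l$ is precisely the formula $\phi=[\vec{\phi}]\xi$ being translated. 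Here the needed strict decrease $c(\theta_l) > c(\theta_r)$ is exactly the content of \cref{unequal}, so the induction hypothesis again applies to $\theta_r$; the base case $t([\vec{\phi}]p)=p\in\lanhopekn$ is immediate, and in the single constructive clause $t([\vec{\phi}]\neg\xi)\ce\neg t([\vec{\phi}]\xi)$ closure of $\lanhopekn$ under negation finishes the job.

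The one place where the argument is not merely ``recursion on subformulas'' is the hope clause $t([\vec{\phi}]H_i\xi)\ce t(\phi_i\rightarrow K_i(\phi_i\rightarrow[\vec{\phi}]\xi))$ and, to a lesser extent, the composition clause $t([\vec{\phi}][\vec{\chi}]\xi)\ce t([[\vec{\phi}]\chi_1,\dots,[\vec{\phi}]\chi_n]\xi)$: in both, the rewriting neither decreases the number of update modalities nor the modal depth --- the hope clause even duplicates the update formulas $\phi_i$ and retains $[\vec{\phi}]$ inside $\xi$. This is exactly the obstacle that the tailored weights of $c$ (the additive $+4$ charged by $H_i$ and the multiplicative form $c([\vec{\phi}]\xi)=(c(\vec{\phi})+1)\cdot c(\xi)$) are designed to overcome. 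Since \cref{unequal} has already verified $c(\theta_l)>c(\theta_r)$ for every axiom, including these two, the present proof need not revisit that computation: it simply invokes \cref{unequal} to license the induction step. I therefore expect the write-up to be short, with the entire difficulty front-loaded into \cref{unequal}; the only care required is to phrase the induction on $c$ rather than on subformula structure, since subformula induction alone fails precisely in the hope and composition clauses.
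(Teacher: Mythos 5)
Your proposal is correct and takes exactly the same route as the paper, whose entire proof is ``by induction on $c(\varphi)$'' with the real work delegated to \cref{unequal}; you have simply spelled out the case analysis and the role of the complexity decrease that the paper leaves implicit. In particular, your observation that induction must be on $c$ rather than on subformula structure (because the $H_i$ and composition clauses do not shrink the formula) is precisely the point of \cref{definition:weight} and \cref{unequal}.
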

\begin{proof}
This follows by  induction on $c(\varphi)$.\qed
\end{proof}

\begin{lemma}[Equiexpressivity] 
\label{lem:trans}
Language $\lanhopekn^{\textit{pub}}$ is equiexpressive with $\lanhopekn$.
\end{lemma}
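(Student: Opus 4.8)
The plan is to prove the two expressivity inclusions separately, with all the real work going into the nontrivial one. For the easy direction, observe that $\lanhopekn \subseteq \lanhopekn^{\textit{pub}}$ by construction of the language, so every $\lanhopekn$-formula is literally a $\lanhopekn^{\textit{pub}}$-formula and hence trivially expressible there. The substance of the lemma is the converse: every $\phi \in \lanhopekn^{\textit{pub}}$ is logically equivalent to some $\lanhopekn$-formula. The natural candidate is the translation $t(\phi)$ from \cref{translation}, which already lands in $\lanhopekn$ by \cref{termination}. So it suffices to establish that $\models \phi \eq t(\phi)$ for all $\phi \in \lanhopekn^{\textit{pub}}$.

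I would prove $\models \phi \eq t(\phi)$ by well-founded induction on the complexity $c(\phi)$ from \cref{definition:weight}, treating each defining clause of $t$ in turn. The clauses split into two kinds. For the purely structural clauses --- $t(p)=p$, $t(\neg \phi)=\neg t(\phi)$, $t(\phi \wedge \xi)=t(\phi)\wedge t(\xi)$, $t(K_i\phi)=K_i t(\phi)$, $t(H_i\phi)=H_i t(\phi)$ --- the recursive call is on a proper subformula of strictly smaller complexity, so the inductive hypothesis applies, and equivalence is preserved because logical equivalence is a congruence for all the connectives (truth being defined compositionally). The base case $t\bigl([\vec\phi]p\bigr)=p$ uses directly the validity of the reduction axiom $[\vec\phi]p \eq p$ established in \cref{theorem:soundness}.

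For the remaining clauses --- those handling $[\vec\phi]\neg\xi$, $[\vec\phi](\xi \wedge \chi)$, $[\vec\phi]K_i\xi$, $[\vec\phi]H_i\xi$, and $[\vec\phi][\vec\chi]\xi$ --- the translation first rewrites the formula according to the corresponding reduction axiom $\theta_l \eq \theta_r$ of \cref{axiomatization} and then recursively translates $\theta_r$. Here I would argue in two steps: first, $\models \theta_l \eq \theta_r$ by the soundness of $\axKH^{\textit{pub}}$ (\cref{theorem:soundness}); second, by \cref{unequal} we have $c(\theta_r) < c(\theta_l)$, so the inductive hypothesis yields $\models \theta_r \eq t(\theta_r)$. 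Since $t(\theta_l)=t(\theta_r)$ by the definition of $t$, transitivity of $\eq$ gives $\models \theta_l \eq t(\theta_l)$. Combining all clauses completes the induction; one can either turn the resulting provable equivalences into validities via \cref{theorem:soundness}, or argue the equivalences semantically from the start, invoking the validity of the axioms directly.

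The step I expect to be most delicate is the hope clause $[\vec\phi]H_i\xi$. There the right-hand side $\phi_i \rightarrow K_i(\phi_i \rightarrow [\vec\phi]\xi)$ is syntactically larger --- it introduces an extra copy of the update $[\vec\phi]\xi$ together with several new connectives --- so termination of the rewriting is not visually obvious and would fail for a naive size measure. It works only because \cref{definition:weight} assigns the inflated weight $c(H_i\phi)=c(\phi)+4$ to the hope modality, which \cref{unequal} exploits to secure $c(\theta_r) < c(\theta_l)$ in precisely this case. Thus the entire induction hinges on that already-established inequality; once it is invoked, the hope case reduces to the same two-step pattern as the others, and the proof is complete.
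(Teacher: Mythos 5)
Your proof is correct for the statement as literally given, and it runs on exactly the same machinery as the paper's: induction on the complexity measure $c$ of \cref{definition:weight}, the translation $t$ of \cref{translation}, termination (\cref{termination}), and the inequalities of \cref{unequal}, including the observation that the hope clause is the one that necessitates the inflated weight $c(H_i\phi)=c(\phi)+4$. The one substantive difference is what the induction delivers. The paper's one-line proof establishes the \emph{provable} equivalence $\axKH^{\textit{pub}} \vdash \phi \eq t(\phi)$, working inside the proof system (reduction axioms plus replacement of provable equivalents), whereas your primary formulation establishes the validity $\KK\HH \models \phi \eq t(\phi)$ via soundness. For equiexpressivity per se the semantic form suffices, and it makes the congruence steps trivial; but the provable form is what \cref{th:pub_complete} actually cites: completeness is obtained by transferring $\axKH \vdash t(\phi)$ to $\axKH^{\textit{pub}} \vdash \phi$ through $\axKH^{\textit{pub}} \vdash \phi \eq t(\phi)$, and with only the semantic equivalence that last step would presuppose the very completeness being proved. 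You do remark that one can instead run the induction deriving provable equivalences and convert them to validities afterwards; that variant is precisely the paper's proof, and it is the version to foreground if the lemma is to play its intended role in the completeness argument.
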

\begin{proof}
It follows 
by  induction on $c(\varphi)$ that  $\axKH^{\textit{pub}} \vdash \phi \leftrightarrow t(\phi)$ for all $\phi \in \lanhopekn^{\textit{pub}}$, where, by \cref{termination}, $t(\phi) \in \lanhopekn$.\qed
\end{proof}

\begin{theorem}[Soundness and completeness]
\label{th:pub_complete}
For all $\phi \in \lanhopekn^{\textit{pub}}$, 
\[
\axKH^{\textit{pub}} \vdash \phi\qquad \Longleftrightarrow\qquad\KK\HH \models \phi.
\]
\end{theorem}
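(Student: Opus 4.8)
The plan is to prove both directions of the biconditional separately. The left-to-right direction (soundness) is already established as \cref{theorem:soundness}, so nothing further is needed there; I would simply invoke it. The entire weight of the argument therefore falls on the right-to-left direction (completeness): I must show that $\KK\HH \models \phi$ implies $\axKH^{\textit{pub}} \vdash \phi$ for every $\phi \in \lanhopekn^{\textit{pub}}$.

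The strategy for completeness is the standard reduction argument for dynamic epistemic logics, reducing everything to the static base logic $\axKH$ whose completeness is \cref{theorem:sckh}. First I would recall the translation $t$ of \cref{translation}, together with the two facts already proved about it: by \cref{termination} we have $t(\phi) \in \lanhopekn$ for every $\phi \in \lanhopekn^{\textit{pub}}$, and by \cref{lem:trans} we have $\axKH^{\textit{pub}} \vdash \phi \leftrightarrow t(\phi)$. These two facts are exactly the engine that drives the reduction. The chain of reasoning is then short: assume $\KK\HH \models \phi$. Since $\axKH^{\textit{pub}}$ is sound (\cref{theorem:soundness}) and $\phi \leftrightarrow t(\phi)$ is provable, that equivalence is also valid, so $\KK\HH \models t(\phi)$. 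But $t(\phi)$ lies in the dynamic-operator-free fragment $\lanhopekn$, so by the completeness half of \cref{theorem:sckh} we obtain $\axKH \vdash t(\phi)$, hence a fortiori $\axKH^{\textit{pub}} \vdash t(\phi)$ since $\axKH^{\textit{pub}}$ extends $\axKH$. Finally, combining this with the provable equivalence $\axKH^{\textit{pub}} \vdash \phi \leftrightarrow t(\phi)$ and using $\MP$ yields $\axKH^{\textit{pub}} \vdash \phi$, as desired.

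The one subtlety worth flagging explicitly is that validity of $t(\phi)$ must be inferred purely from soundness plus the provable equivalence, rather than from any separate semantic lemma about $t$; this keeps the argument self-contained and avoids a circular appeal to the very completeness we are proving. I would also note that \cref{lem:trans} is where the genuine work was done --- it is proved by induction on the complexity measure $c$ of \cref{definition:weight}, relying on the strict inequalities of \cref{unequal} to guarantee that each application of a reduction axiom strictly decreases complexity, so that the induction is well-founded. Thus the \emph{main obstacle} of the whole development is not this final theorem at all but rather the termination/equiexpressivity machinery feeding into it; given those results, the completeness proof itself is a three-line assembly. For this reason the proof I would write is deliberately terse, essentially a bookkeeping combination of \cref{theorem:soundness}, \cref{theorem:sckh}, \cref{termination}, and \cref{lem:trans}.
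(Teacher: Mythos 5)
Your proposal is correct and follows essentially the same route as the paper's own proof: soundness by \cref{theorem:soundness}, then completeness by deriving $\KK\HH \models t(\phi)$ from the provable equivalence $\phi \leftrightarrow t(\phi)$ (\cref{lem:trans}) plus soundness, applying \cref{theorem:sckh} to get $\axKH \vdash t(\phi)$, and lifting this back to $\axKH^{\textit{pub}} \vdash \phi$. Even the subtlety you flag --- that validity of $t(\phi)$ comes from soundness applied to the provable equivalence, avoiding circularity --- is precisely how the paper argues, so nothing needs to change.
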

\begin{proof}
Soundness was proved in \cref{theorem:soundness}. To prove completeness,
assume $\KK\HH \models \phi$. According to \cref{lem:trans}, we have $\axKH^{\textit{pub}} \vdash \phi \leftrightarrow t(\phi)$. Therefore, by \cref{theorem:soundness}, $\KK\HH \models \phi \leftrightarrow t(\phi)$ follows. Since $\KK\HH \models \phi$ (by assumption), we obtain $\KK\HH \models t(\phi)$. By applying \cref{theorem:sckh}, $\axKH \vdash t(\phi)$ further follows. Consequently, $\axKH^{\textit{pub}} \vdash t(\phi)$. Finally, since $\axKH^{\textit{pub}} \vdash \phi \leftrightarrow t(\phi)$, $\axKH^{\textit{pub}} \vdash \phi$.%
\qed
\end{proof}
\begin{corollary}[Necessitation for public hope updates]
\label{cor:nec}
For all $\psi \in \lanhopekn^{\textit{pub}}$ and $\vec{\phi} \in (\lanhopekn^{\textit{pub}})^n$, 
\[
\axKH^{\textit{pub}} \vdash \psi
\qquad \Longrightarrow\qquad
\axKH^{\textit{pub}} \vdash [\vec{\phi}] \psi.
\]
\end{corollary}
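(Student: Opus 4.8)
The plan is to route the argument through the completeness theorem rather than attempting a direct syntactic derivation. Note that $\axKH^{\textit{pub}}$ contains no primitive necessitation rule for the update modality $[\vec{\phi}]$; the update axioms are reduction axioms that push the modality inward, so there is no obvious uniform proof transformation taking a derivation of $\psi$ to a derivation of $[\vec{\phi}]\psi$ directly at the syntactic level. Instead, I would exploit the fact (\cref{th:pub_complete}) that $\axKH^{\textit{pub}} \vdash \chi$ is equivalent to $\KK\HH \models \chi$ for every $\chi \in \lanhopekn^{\textit{pub}}$, and prove the corresponding \emph{semantic} statement: validity is preserved under prefixing with $[\vec{\phi}]$.

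Concretely, first I would assume $\axKH^{\textit{pub}} \vdash \psi$ and apply the soundness direction of \cref{th:pub_complete} to obtain $\KK\HH \models \psi$, i.e.\ $\psi$ is true at every world of every model in the class $\KK\HH$. The core step is then to show $\KK\HH \models [\vec{\phi}]\psi$. Fix an arbitrary $M = (W,\pi,\KK,\HH) \in \KK\HH$ and a world $w \in W$. By the semantics of public hope update, $M, w \models [\vec{\phi}]\psi$ holds if{f} $M^{\vec{\phi}}, w \models \psi$. Here I invoke \cref{proposition:updatedmodelstaysinkh}, which guarantees that the updated model $M^{\vec{\phi}}$ is again a member of $\KK\HH$. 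Since $\psi$ is valid throughout the entire class $\KK\HH$, it is in particular true at world $w$ of the model $M^{\vec{\phi}} \in \KK\HH$, so $M^{\vec{\phi}}, w \models \psi$ and hence $M, w \models [\vec{\phi}]\psi$. As $M$ and $w$ were arbitrary, $\KK\HH \models [\vec{\phi}]\psi$.

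Finally, I would apply the completeness direction of \cref{th:pub_complete} to the valid formula $[\vec{\phi}]\psi \in \lanhopekn^{\textit{pub}}$, concluding $\axKH^{\textit{pub}} \vdash [\vec{\phi}]\psi$, as required. The only genuinely load-bearing ingredient beyond soundness and completeness is the closure property of \cref{proposition:updatedmodelstaysinkh}: without knowing that $M^{\vec{\phi}}$ stays inside $\KK\HH$, the validity of $\psi$ over the class would not transfer to the updated model. I expect this to be the conceptual crux, though the proposition has already been established, so no real obstacle remains; the remainder is a routine unwinding of the update semantics and two appeals to \cref{th:pub_complete}.
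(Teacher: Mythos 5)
Your proof is correct and follows exactly the same route as the paper's own proof: soundness of $\axKH^{\textit{pub}}$ gives $\KK\HH \models \psi$, \cref{proposition:updatedmodelstaysinkh} ensures $M^{\vec{\phi}} \in \KK\HH$ so that validity of $\psi$ transfers to the updated model and hence $\KK\HH \models [\vec{\phi}]\psi$, and completeness converts this back into derivability. Your identification of the closure property as the load-bearing ingredient matches the paper's argument precisely.
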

\begin{proof}
Assume $\axKH^{\textit{pub}} \vdash \psi$. By \cref{th:pub_complete}, $\KK\HH \models\psi$. In particular, for any $M=(W,\pi,\KK,\HH) \in \KK\HH$, we have $M^{\vec{\phi}} \models \psi$ since $M^{\vec{\phi}} \in \KK\HH$ by \cref{proposition:updatedmodelstaysinkh}. Thus, $M^{\vec{\phi}}, w \models \psi$  for all $w \in W$. In other words, $M, w \models [\vec{\phi}]\psi$ for all $w \in W$, i.e., $M \models[\vec{\phi}]\psi$. Since $\KK\HH \models[\vec{\phi}]\psi$, we get $\axKH^{\textit{pub}} \vdash [\vec{\phi}] \psi$ by \cref{th:pub_complete}.\qed
\end{proof}

\section{Private Hope Update} 
\label{sec.semi-private}

In the case of the public hope update mechanism introduced in \cref{sec:public},  after the update there is no uncertainty about what happened. In some distributed FDIR schemes, including self-correction, however, the hope update at an agent occurs in a less public way. To increase the application coverage of our logic, we therefore provide the alternative of private hope updates. For that, we use structures inspired by action models. Strictly speaking, such updates are known as \emph{semi-private} (or \emph{semi-public}) updates, as the agents are aware of their uncertainty and know what they are uncertain about, whereas in fully private update the agent does not know that the action took place \cite{hvd.jolli:2002} and may, in fact, believe that nothing happened. The resulting language can be viewed as a generalization of~$\lanhopekn^{pub}$, where the latter now becomes a special case. 
 
\subsection{Syntax and Semantics}

\begin{definition}[Hope update model]\label{def.x}
A \emph{hope update model} for a logical language $\lang$ is a tuple 
\[
U = (E,\vartheta,\KK^{U})
\]
where $E$ is a finite non-empty set of \emph{actions}, $\vartheta : E \to (\A \to \lang)$ is a \emph{hope update function}, and $\KK^U: \A \to \power(E \times E)$ such that all $\KK^{U}_i$~are equivalence relations. For $\vartheta(e)(i)$ we write~$\vartheta_i(e)$. As before, formulas~$\vartheta_i(e)\in\lang$ are \emph{hope update formulas}. 
A \emph{pointed hope update model} (for the logical language $\lang$) is a pair $(U,e)$ where $e \in E$.
\end{definition}

\begin{definition}[Language $\lanhopekn^{priv}$] \label{def.y}
Language $\lanhopekn^{priv}$ is obtained from $\lanhopekn$ by adding one new  construct: 
\[
\phi \coloncolonequals p \mid \neg \phi \mid (\phi \et \phi) \mid K_i \phi \mid H_i \phi \mid [U,e] \phi
\]
where $(U,e)$ is a pointed hope update model for language~$\lanhopekn^{priv}$.
\end{definition}
\Cref{def.y} is given by mutual recursion as usual:  formulas may include hope update models while  hope update models  must include formulas to be used as hope update formulas. All (pointed) hope update models till the end of this section are for language~$\lanhopekn^{priv}$.

\begin{definition}[Semantics of private hope update] \label{def.semanticssemi}
Let $U = (E,\vartheta,\KK^U)$ be a hope update model, $M = (W,\pi,\KK,\HH) \in \KK\HH$, $w \in W$, and $e \in E$. Then:
\[
M,w \models [U,e]\phi \quad \text{if{f}} \quad M \times U,  (w,e) \models \phi,
\]
where $M \times U = (W^\times,\pi^\times,\KK^\times, \HH^\times)$ is such that:
\[\begin{array}{l@{\quad}l@{\quad}l}
W^\times & \ce & W \times E \\
(w,e) \in \pi^\times(p) & \text{if{f}} & w \in \pi(p) \\
(w,e) \KK^\times_i (v,f) & \text{if{f}} & w \KK_i v \text{ and } e \KK^{U}_i f \\
(w,e) \HH^\times_i (v,f) & \text{if{f}} & (w,e) \KK^\times_i (v,f), \text{ and }  M,w \models \vartheta_i(e), \text{ and } M,v \models \vartheta_i(f)
\end{array}\]
\end{definition}
Public hope updates can be viewed as singleton hope update models. Given formulas $\vec{\varphi} \in (\lanhopekn^{pub})^n$, define $ \pub \ce (\{e\}, \vartheta, \KK^\pub)$, where $\vartheta_i(e)\ce\varphi_i$ and $\KK^\pub\ce \{(e,e)\}$.

\paragraph*{Difference with action models.}
Although our hope update models look like action models, they are not really action models in the sense of \cite{baltagetal:1998}. Our actions do not have executability preconditions, such that the updated model is not a restricted modal product but rather the full product. Another difference is that, by analogy with Kripke models for knowledge and hope, we would then have expected a hope relation in the update models. But there is none in our approach.

\begin{proposition}
\label{proposition:semi-privateupdatestayinkh}
$M \times U \in \KK\HH$ for any   hope update model  $U$ and  $M\in \KK\HH$.
\end{proposition}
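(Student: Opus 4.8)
The plan is to mirror the structure of the proof of \cref{proposition:updatedmodelstaysinkh}, since the product model $M \times U$ shares its domain, valuation, and knowledge relations with a Kripke model built from $M$ and $U$, and the only nontrivial components are the hope relations $\HH^\times_i$. First I would observe that each $\KK^\times_i$ is an equivalence relation: reflexivity, symmetry, and transitivity of $\KK^\times_i$ follow directly from the corresponding properties of $\KK_i$ (an equivalence relation by assumption on $M \in \KK\HH$) and of $\KK^U_i$ (an equivalence relation by \cref{def.x}), because $\KK^\times_i$ is defined as the componentwise conjunction $w\KK_i v \text{ and } e\KK^U_i f$. This disposes of the knowledge-relation requirement and lets me focus entirely on showing that each $\HH^\times_i$ is shift-serial and satisfies $\HinK$ and $\oneH$.

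For the three hope-relation conditions I would argue pointwise, exactly paralleling \cref{proposition:updatedmodelstaysinkh} but carrying the action component along. For $\HinK$, the inclusion $\HH^\times_i \subseteq \KK^\times_i$ is immediate from the definition of $\HH^\times_i$, whose first conjunct is $(w,e)\KK^\times_i(v,f)$. For shift-seriality, suppose $(v,f) \in \HH^\times_i(w,e)$; then $(w,e)\KK^\times_i(v,f)$ together with $M,w \models \vartheta_i(e)$ and $M,v \models \vartheta_i(f)$. By symmetry of $\KK^\times_i$ we get $(v,f)\KK^\times_i(w,e)$, and since the two membership conditions $M,v \models \vartheta_i(f)$ and $M,w \models \vartheta_i(e)$ are symmetric in the two points, $(w,e) \in \HH^\times_i(v,f)$, so $\HH^\times_i(v,f) \neq \varnothing$. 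For $\oneH$, assume $\HH^\times_i(w,e) \neq \varnothing$, $\HH^\times_i(v,f) \neq \varnothing$, and $(w,e)\KK^\times_i(v,f)$; nonemptiness of the first set forces $M,w \models \vartheta_i(e)$, and of the second forces $M,v \models \vartheta_i(f)$, whence all three conjuncts in the definition of $\HH^\times_i$ hold and $(w,e)\HH^\times_i(v,f)$ follows.

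The main point requiring a little care, as opposed to being purely routine, is the interplay of the two components in the symmetry argument for shift-seriality: one must check that the \emph{knowledge} part $\KK^\times_i$ is genuinely symmetric in both coordinates simultaneously (which it is, since both $\KK_i$ and $\KK^U_i$ are symmetric), and that the two hope-update-formula side conditions are attached to the correct endpoints so that swapping $(w,e)$ and $(v,f)$ leaves them intact. I would therefore state explicitly that $M,w \models \vartheta_i(e)$ and $M,v \models \vartheta_i(f)$ are precisely the conditions needed at the swapped pair $(v,f)\HH^\times_i(w,e)$, just with the roles of the two points exchanged, so no new semantic evaluation is required. Everything else is a direct transcription of the singleton-model argument with the action coordinate threaded through, and I expect no genuine obstacle; the proof closes with a $\qed$ after the $\oneH$ case.
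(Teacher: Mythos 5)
Your proposal is correct and follows essentially the same route as the paper's own proof: the observation that each $\KK^\times_i$ is an equivalence relation (which the paper dismisses as obvious), $\HinK$ by definition, shift-seriality via symmetry of $\KK^\times_i$ with the witness $(w,e) \in \HH^\times_i\bigl((v,f)\bigr)$, and $\oneH$ by extracting $M,w \models \vartheta_i(e)$ and $M,v \models \vartheta_i(f)$ from the two nonemptiness assumptions. The only difference is that you spell out the equivalence-relation check and the endpoint bookkeeping in the symmetry step, which the paper leaves implicit.
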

\begin{proof}
The proof is somewhat similar to that of \cref{proposition:updatedmodelstaysinkh}. It is obvious that all~$\KK^\times_i$ are equivalence relations. Let us show now that for all $i \in \mathcal{A}$  relations $\HH^{\times}_i$ are shift-serial and that they satisfy the properties $\HinK$ and $\oneH$.
\begin{itemize}
\item {\bf $\HH^\times_i$ is shift-serial:} Let $(w,e) \in W^\times$. Assume $(w,e) \HH^\times_i (v,f)$, that is,\linebreak $(w,e) \KK^\times_i (v,f)$, and $M,w \models \vartheta_i(e)$, and $M,v \models \vartheta_i(f)$. $(v,f) \KK^\times_i (w,e)$ follows by symmetry of~$\KK^\times_i$. Therefore, $\HH^\times_i \bigl((v,f)\bigr) \ne \varnothing$ since $(w,e) \in \HH^\times_i \bigl((v,f)\bigr)$.
\item {\bf $\HH^\times_i$ satisfies $\HinK$}:  This follows by definition.
\item {\bf $\HH^\times_i$ satisfies $\oneH$}:  Let $(w,e), (v,f) \in W^\times$. Assume that $\HH^\times_i \bigl((w,e)\bigr) \ne \varnothing$, $\HH^\times_i \bigl((v,f)\bigr) \ne \varnothing$, and $(w,e) \KK^\times_i (v,f)$. As $\HH^\times_i \bigl((w,e)\bigr) \ne \varnothing$, \mbox{$M,w \models \vartheta_i(e)$}. As $\HH^\times_i \bigl((v,f)\bigr) \ne \varnothing$, $M,v \models \vartheta_i(f)$. Therefore, $(w,e) \HH^\times_i (v,f)$.\qed
\end{itemize}
\end{proof}

\begin{definition}
\label{def:comp}
Let $U = (E,\vartheta,\KK^{U})$ and $U' = (E',\vartheta',\KK^{U'})$ be hope update models. The \emph{composition} $(U ; U')$ is $(E'',\vartheta'',\KK^{U ; U'})$ such that:
\[\begin{array}{l@{\quad}l@{\quad}l}
E'' & \ce & E \times E' \\
\smallskip
\vartheta''_i\bigl((e,e')\bigr) & \ce & [U,e] \vartheta_i'(e')\\
\smallskip
(e,e') \KK^{U ; U'}_i (f,f') & \text{if{f}} & e \KK^{U}_i f \text{ and } e' \KK^{U'}_i f'
\end{array}\]
\end{definition}
Since $\KK^{U}_i$ and $\KK^{U'}_i$ are equivalence relations, $\KK^{U ; U'}_i$ is also an equivalence relation, so that $(U ; U')$~is a hope update model.

\subsection{Applications}

The arguably most important usage of private updates in distributed FDIR 
is to express the uncertainty of agents about whether an update affects other agents. 
\looseness=-1

\begin{example}
\label{example:semiprivate}
We present several uses of private hope updates:

\begin{enumerate}
\item \emph{Private correction.}
We reconsider the example from \cref{section:introduction}, only this time we privately correct agent $a$ based on $p_b$ such that agent~$b$ is uncertain whether the hope update happens. This can be modeled by two hope update formulas for agent $a$: $\neg H_a \bot \vel p_b$ and $\neg H_a \bot$. With $\neg H_a \bot \vel p_b$ we associate an event~$c_{p_b}$ where the correction takes place based on the additional constraint~$p_b$, and with $\neg H_a \bot$ we associate an event $noc$ where correction does not take place. Writing $\vartheta(e) = \bigl((\vartheta_a(e),\vartheta_b(e)\bigr)$,  we get $U\ce(E, \vartheta, \KK^{U})$, where:
\[\begin{array}{ll@{\quad}l}
E &\ce \{c_{p_b}, noc\} & \KK^{U}_{a} \ce \text{the identity relation } \{(e,e) \mid e \in E\}\\
\vartheta (c_{p_b}) &\ce (\neg H_a \bot \vel p_b,\neg H_b \bot) \quad & \KK^{U}_{b} \ce \text{the universal relation } E \times E\\
\vartheta (noc) &\ce (\neg H_a \bot,\neg H_b \bot)
\end{array}\]

\[
\begin{tikzpicture}[z=0.35cm]
\node (000) at (0,0,0) {$\ourpmb{0}0$};
\node (010) at (0,3,0) {$0\ourpmb{1}$};
\node (100) at (3,0,0) {$1\ourpmb{0}$};
\node (110) at (3,3,0) {$1\ourpmb{1}$};
\draw (000) -- node[fill=white,inner sep=1pt] {$b$} (100);
\draw (000) -- node[fill=white,inner sep=1pt] {$a$} (010);
\draw (010) -- node[fill=white,inner sep=1pt] {$b$} (110);
\draw (100) -- node[fill=white,inner sep=1pt] {$a$} (110);
\end{tikzpicture}
\quad \raisebox{.4cm}{\Large $\times$} \quad 
\begin{tikzpicture}[z=0.35cm]
\node (000) at (0,0,0) {$c_{p_b}$};
\node (010) at (0,0,4) {$noc$};
\draw (000) -- node[fill=white,inner sep=1pt] {$b$} (010);
\end{tikzpicture}
\quad \raisebox{.4cm}{\Large $=$} \quad 
\begin{tikzpicture}[z=0.35cm]
\node (000) at (0,0,0) {$\ourpmb{0}0$};
\node (001) at (0,0,3) {$\ourpmb{0}0$};
\node (010) at (0,3,0) {$\ourpmb{01}$};
\node (011) at (0,3,3) {$0\ourpmb{1}$};
\node (100) at (3,0,0) {$1\ourpmb{0}$};
\node (101) at (3,0,3) {$1\ourpmb{0}$};
\node (110) at (3,3,0) {$\ourpmb{11}$};
\node (111) at (3,3,3) {$1\ourpmb{1}$};
\draw (000) -- node[fill=white,inner sep=1pt] {$b$} (100);
\draw (001) -- node[fill=white,inner sep=1pt] {$b$} (101);
\draw (011) -- node[fill=white,inner sep=1pt] {$b$} (111);

\draw (000) -- node[fill=white,inner sep=1pt] {$a$} (010);
\draw (001) -- node[fill=white,inner sep=1pt] {$a$} (011);
\draw (101) -- node[fill=white,inner sep=1pt] {$a$} (111);

\draw (000) -- node[fill=white,inner sep=1pt] {$b$} (001);
\draw (010) -- node[fill=white,inner sep=1pt] {$b$} (011);
\draw (100) -- node[fill=white,inner sep=1pt] {$b$} (101);
\draw (110) -- node[fill=white,inner sep=1pt] {$b$} (111);
\draw (010) -- node[fill=white,inner sep=1pt] {$b$} (110);
\draw (100) -- node[fill=white,inner sep=1pt] {$a$} (110);
\end{tikzpicture}
\]
When labeling worlds in the figure above, we have abstracted away from the event being executed in a world. Having the same name, therefore, does not mean being the same world. For example, the world $\ourpmb{01}$ at the front of the cube `really' is the pair $(01,c_{p_b})$ with \mbox{$H_a\bigl((01,c_{p_b})\bigr) \ne \varnothing$} and $H_b\bigl((01,c_{p_b})\bigr) \ne \varnothing$. 
We now have for example that, in state $01$, where $b$ knew that $a$ was faulty but $a$ herself did not know  this:

\[\begin{array}{l@{\quad}l}
M, 01 \models  [U,c_{p_b}] (\neg H_{a} \bot \et K_a \neg H_a \bot) & \text{$a$ became correct and now knows she is correct}\\
M, 01 \models [U,c_{p_b}] \neg K_{b} K_{a} \neg H_{a} \bot & \text{$b$ does not know that $a$ knows she is correct}\\
M, 01 \models [U,c_{p_b}] \neg (K_{b} H_{a} \bot \vel K_b \neg H_a \bot) & \text{$b$ does not know whether  $a$ is correct}
\end{array}
\]
\item \emph{Self-correction under uncertainty of who self-corrects.}
Recall that the hope update formula for self-correction of $a$ generally has form  $\neg H_a \bot \lor (\phi \land K_a H_a \bot)$. Instead of two agents, as in Example~\ref{example.selfcorrect}, now consider any number $n = |\A|$ of agents.  Of course, the difference with \cref{example.selfcorrect}  only kicks in if $n \geq 3$. 

We can encode that an arbitrary agent self-corrects, while the remaining agents are uncertain which agent this is, by a hope update model consisting of $n$ events $e_1,\dots,e_n$ where event~$e_i$ represents that agent $i$ self-corrects. We now set $\vartheta_i(e_i) \ce \neg H_i\bot \lor (\psi_i \land K_i H_i \bot)$ for each~$i \in \A$ (where $\psi_i$~is some optional constraint for agent~$i$ to self-correct) and $\vartheta_j(e_i) \ce \neg H_j\bot$ for all~$j \neq i$. We let each agent be unable to distinguish among any events wherein it does not self-correct: 
\[
e_i \KK^U_j e_k\quad\Longleftrightarrow\quad  i = j = k  \text{ or }j \notin\{i,k\} .
\] 
Thus, if an agent considers it possible that multiple agents know that they are incorrect, then after this update such an agent would generally not know whether somebody self-corrected and, if so, who it was.
\item \emph{Self-correction under uncertainty of the source of state recovery.}
An alternative generalization of \cref{example.selfcorrect} is that it remains public that a given agent~$a$ self-corrects but  there is uncertainty over the agent from whom agent $a$ can get its state recovery information, which can be encoded via formulas $\phi_i$ in $a$'s hope update formulas $\neg H_{a}\bot \lor (\phi_i \land K_{a} H_a \bot)$, for $i \in \A$ with $i \neq a$ (we assume that $a$ does not get the recovery information from itself). Among these $\phi_i$ the recovering agent $a$ non-deterministically chooses one. This is implemented in a hope update model of size~$n-1$, with events $e_i$ for all $i \neq a$ such that $\vartheta_a(e_i) \ce \neg H_a\bot \lor (\phi_i \land K_a H_a \bot)$ and \mbox{$\vartheta_{j}(e_i)=\neg H_j \bot$} for all $j \ne a$, and such that $\KK^U_a$ is the identity on this domain of events (as~$a$~knows what choice it makes between the $\phi_i$'s), whereas  for $i \neq a$, relation~$\KK^U_i$ is the universal relation on this domain (any other agent remains uncertain among all these alternatives).\looseness=-1

\end{enumerate}
\end{example}

\subsection{Axiomatization}

\begin{definition}[Axiomatization $\axKH^\priv$]
\label{axiomatization2}
$\axKH^\priv$ extends $\axKH$ with axioms 
\[\begin{array}{lll@{\qquad}lll}
{[U,e]}p &\eq& p 
& {[U,e]}\neg \varphi &\eq& \neg [U,e]\varphi 
\\ 
{[U,e]}(\varphi \et \psi) &\eq& [U,e]\varphi \et [U,e]\psi 
& 
{[U,e]}[U',e']\varphi &\eq& \bigl[(U;U'),(e,e')\bigr]\varphi 
\\[.5ex] 
{[U,e]} K_i \varphi &\eq& \Et\limits_{e \KK^U_i f} K_i [U,f]\varphi 
&
{[U,e]} H_i \varphi &\eq& \left(\vartheta_i(e) \imp \Et\limits_{e \KK^{U}_i f} K_i \bigl(\vartheta_i(f) \imp [U,f]\phi\bigr)\right) 
\end{array}\]
where $\phi, \psi \in \lanhopekn^{\textit{priv}}$, $(U', e')$ is a pointed hope update model, $p \in \Prop$, $i \in \mathcal{A}$, and $U = (E,\vartheta,\KK^U)$ is a hope update model with $e,f \in E$.
\end{definition}

\begin{theorem}[Soundness]
\label{theorem:soundness2}
For all $\phi \in \lanhopekn^\priv$, $\axKH^{\priv} \vdash \phi$ implies $\KK\HH \models \phi$.
\end{theorem}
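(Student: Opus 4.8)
The plan is to mirror the structure of the soundness proof for the public case (\cref{theorem:soundness}): by \cref{theorem:sckh}, all axioms and rules of $\axKH$ are sound with respect to $\KK\HH$, and the inference rules $\MP$ and $\Nec^K$ preserve validity, so it suffices to verify that each of the six new reduction axioms of \cref{axiomatization2} is valid over $\KK\HH$. For each axiom I would fix an arbitrary $M = (W,\pi,\KK,\HH) \in \KK\HH$, a hope update model $U = (E,\vartheta,\KK^U)$, a point $e \in E$, and a world $w \in W$, and establish the biconditional at $(M,w)$ by unfolding the semantics of $[U,e]$ from \cref{def.semanticssemi}, i.e.\ by passing to the product model $M \times U$ evaluated at $(w,e)$. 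Throughout I would rely on \cref{proposition:semi-privateupdatestayinkh}, which guarantees $M \times U \in \KK\HH$, so that the truth definition is well-posed on the product.

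The atomic, negation, and conjunction axioms are immediate from the definitions of $\pi^\times$ and the classical clauses for the boolean connectives, exactly as in \cref{theorem:soundness}. The knowledge axiom ${[U,e]} K_i \varphi \eq \Et_{e \KK^U_i f} K_i [U,f]\varphi$ unfolds via the product relation $\KK^\times_i$: since $(w,e)\KK^\times_i(v,f)$ iff $w\KK_i v$ and $e\KK^U_i f$, quantifying over $\KK^\times_i$-successors of $(w,e)$ splits into an outer conjunction over the $\KK^U_i$-successors $f$ of $e$ and, for each such $f$, an inner universal quantification over the $\KK_i$-successors $v$ of $w$, which is precisely $K_i[U,f]\varphi$ at $w$. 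The hope axiom is the analogue of the $H_i$ axiom in the public case, and I would verify it by the same chain of equivalences, now reading off $\HH^\times_i$ from \cref{def.semanticssemi}: $(w,e)\HH^\times_i(v,f)$ holds iff $(w,e)\KK^\times_i(v,f)$ together with $M,w\models\vartheta_i(e)$ and $M,v\models\vartheta_i(f)$. Pushing the precondition $M,w\models\vartheta_i(e)$ to the front as an implication and then distributing the inner quantifier over $f$ and $v$ reproduces $\vartheta_i(e) \imp \Et_{e\KK^U_i f} K_i(\vartheta_i(f) \imp [U,f]\phi)$.

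The step I expect to be the main obstacle is the composition axiom ${[U,e]}[U',e']\varphi \eq \bigl[(U;U'),(e,e')\bigr]\varphi$, which corresponds to the iterated-update axiom of the public case but now requires reasoning about the composed update model of \cref{def:comp}. As in \cref{theorem:soundness}, the crux is a semantic identity between two products, namely $(M \times U) \times U' \cong M \times (U;U')$ via the bijection $((w,e),e') \mapsto (w,(e,e'))$; once this is established, the biconditional follows by unfolding $[U,e][U',e']\varphi$ at $(w,e)$ into $\varphi$ at $((w,e),e')$ in $(M\times U)\times U'$ and matching it against $\varphi$ at $(w,(e,e'))$ in $M\times(U;U')$. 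Verifying this isomorphism amounts to checking that the four components agree under the bijection. The valuation and the knowledge relation are routine: $\pi^\times$ ignores the event coordinate, and $\KK^{U;U'}_i$ was defined precisely as the product of $\KK^U_i$ and $\KK^{U'}_i$. The delicate component is the hope relation, because the composed hope update formula $\vartheta''_i((e,e')) \ce [U,e]\vartheta'_i(e')$ contains a nested update modality; the key observation is that $M,w \models [U,e]\vartheta'_i(e')$ holds iff $M\times U,(w,e)\models\vartheta'_i(e')$, which is exactly the precondition that defines $\HH^\times_i$ on the outer product $(M\times U)\times U'$. Matching these two descriptions of the hope relation under the bijection is the heart of the argument and the place where I would be most careful.
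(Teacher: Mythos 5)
Your proposal is correct and takes essentially the same approach as the paper: reduce soundness to the validity of the new reduction axioms, verify the knowledge and hope axioms by unfolding the product semantics of \cref{def.semanticssemi}, and handle the composition axiom via the isomorphism $(M \times U) \times U' \cong M \times (U;U')$ given by $\bigl((w,e),e'\bigr) \mapsto \bigl(w,(e,e')\bigr)$. In particular, you correctly identify the paper's key observation that $M,w \models [U,e]\vartheta'_i(e')$ if{f} $M \times U, (w,e) \models \vartheta'_i(e')$, which is exactly what reconciles the two descriptions of the hope relation under the bijection.
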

\begin{proof}
As in \cref{theorem:soundness}, it is sufficient to show the validity of the new axioms. Additionally, the proofs for first three axioms for atomic propositions, negation, and conjunction are similar to those for the analogous axioms of $\axKH^{\pub}$ and of action model logic, so are omitted here.
For the remaining three axioms, consider arbitrary Kripke model $M=(W,\pi,\KK,\HH) \in \KK\HH$ with $w \in W$, as well as hope update models $U = (E,\vartheta,\KK^U)$ and $U' = (E',\vartheta',\KK^{U'})$ with $e,f \in E$ and $e' \in E'$. Let $M \times U =  (W^\times,\pi^\times,\KK^\times, \HH^\times)$ according to \cref{def.semanticssemi} and $(U ; U')=(E'',\vartheta'',\KK^{U ; U'})$ according to \cref{def:comp}. To avoid unnecessary clutter, in this proof we use single parentheses instead of double ones, e.g., writing $\HH^\times_i (w,e)$ instead of $\HH^\times_i \bigl((w,e)\bigr)$.
\begin{itemize}
\item  
Axiom ${[U,e]} K_i \varphi \eq \Et_{e \KK^U_i f} K_i [U,f]\varphi $ is valid because
\\
$M, w \models [U,e] K_i \varphi$%
\quad if{f}\quad
$M \times U,  (w,e) \models K_{i} \phi$%
\quad if{f}\\ 
$\bigl(\forall (v,f) \in \KK^\times_i (w,e)\bigr)\,\, M \times U,  (v,f) \models \phi$%
\quad  if{f}\quad
$\bigl(\forall (v,f) \in \KK^\times_i (w,e)\bigr)\,\,M,v   \models [U,f]\phi$%
\quad if{f}\\ 
$(\forall v \in W)(\forall f\in E) \bigl(v \in \KK_i (w)\,\&\, e \KK^{U}_i f \,\, \Longrightarrow\,\, M, v \models  [U,f]\phi\bigr)$%
\quad if{f}\\ 
$(\forall f\in E)\Bigl(e \KK^{U}_i f\, \Longrightarrow\, 
\bigl(\forall v \in  \KK_i (w)\bigr)\,\, M, v \models  [U,f]\phi\Bigr)$%
\quad if{f}\quad\\ 
$(\forall f\in E)\bigl(e \KK^{U}_i f\, \Longrightarrow\, 
M, w \models K_i[U,f]\phi\bigr)$%
\quad if{f}\quad 
$M, w \models \Et_{e \KK^{U}_i f} K_i [U,f]\phi$
\item  
Axiom ${[U,e]} H_i \varphi \eq \left(\vartheta_i(e) \imp \Et_{e \KK^{U}_i f} K_i \bigl(\vartheta_i(f) \imp [U,f]\phi\bigr)\right)$ is valid because
\\
$M, w \models [U,e] H_i \varphi$%
\quad if{f}\quad
$M \times U,  (w,e) \models H_{i} \phi$%
\quad if{f}\\ 
$\bigl(\forall (v,f) \in \HH^\times_i (w,e)\bigr)\,\, M \times U,  (v,f) \models \phi$%
\quad  if{f}\\ 
$\bigl(\forall (v,f) \in \KK^\times_i (w,e)\bigr)\,\,  \bigl(M,w \models \vartheta_i(e) \,\&\, M,v \models \vartheta_i(f) \quad\Longrightarrow\quad M \times U,  (v,f) \models \phi\bigr)$%
\quad if{f}\\ 
$M, w \models \vartheta_i(e)\,\,\Longrightarrow\,\,\bigl(\forall (v,f) \in \KK^\times_i (w,e)\bigr)\,\,\bigl(M, v \models \vartheta_i(f)\quad\Longrightarrow\quad M,v   \models [U,f]\phi\bigr)$%
\quad if{f}\\ 
$M, w \models \vartheta_i(e)\,\,\Longrightarrow\,\,\bigl(\forall (v,f) \in \KK^\times_i (w,e)\bigr)\,\, M, v \models \vartheta_i(f)\rightarrow [U,f]\phi$%
\quad if{f}\\ 
$M, w \models \vartheta_i(e)\,\,\Longrightarrow\,\,(\forall v \in W)(\forall f\in E) \bigl(v \in \KK_i (w)\,\&\, e \KK^{U}_i f \,\, \Longrightarrow\,\, M, v \models \vartheta_i(f)\rightarrow [U,f]\phi\bigr)$%
\quad if{f}\\ 
$M, w \models \vartheta_i(e)\,\,\Longrightarrow\,\,(\forall f\in E)\Bigl(e \KK^{U}_i f\, \Longrightarrow\, 
\bigl(\forall v \in  \KK_i (w)\bigr)\,\, M, v \models \vartheta_i(f)\rightarrow [U,f]\phi\Bigr)$%
\quad if{f}\\ 
$M, w \models \vartheta_i(e)\,\,\Longrightarrow\,\,(\forall f\in E)\Bigl(e \KK^{U}_i f\, \Longrightarrow\, 
M, w \models K_i\bigl(\vartheta_i(f)\rightarrow [U,f]\phi\bigr)\Bigr)$%
\quad if{f}\\ 
$M, w \models \vartheta_i(e)\,\,\Longrightarrow\,\,
M, w \models \Et_{e \KK^{U}_i f} K_i\bigl(\vartheta_i(f)\rightarrow [U,f]\phi\bigr)$%
\quad if{f}\\ 
$M, w \models \vartheta_i(e) \rightarrow \Et_{e \KK^{U}_i f} K_i\bigl(\vartheta_i(f) \rightarrow [U,f]\varphi\bigr)$.

\item
To show the validity of axiom ${[U,e]}[U',e']\varphi \eq \bigl[(U;U'),(e,e')\bigr]\varphi $, we first  show that models $(M \times U) \times U'$ and $M \times (U ; U')$ are isomorphic.  It is easy to see that models  $(M \times U) \times U' = (W^{\times\!\times},\pi^{\times\!\times},\KK^{\times\!\times},\HH^{\times\!\times})$ and $M \times (U ; U')=(W^{;},\pi^{;},\KK^{;},\HH^{;})$ where
\begin{itemize}
\item 
$W^{\times\!\times}=(W \times E) \times E'$;
\item
$W^{;} = W \times (E \times E')$;
\item 
$\pi^{\times\!\times}(p) = \bigl\{\bigl((w,e),e'\bigr) \mid w \in \pi(p)\bigr\}$;
\item
$\pi^{;}(p) =\bigl\{\bigl(w,(e,e')\bigr) \mid w \in \pi(p)\bigr\}$;
\item 
$\bigl((w,e),e'\bigr) \KK_i^{\times\!\times} \bigl((v,f),f'\bigr)$%
\quad if{f} \quad 
$w \KK_i v$, and $e \KK^U_i f$, and $e' \KK^{U'}_i f'$;
\item 
$\bigl(w,(e,e')\bigr) \KK_i^{;} \bigl(v,(f,f')\bigr)$
\quad if{f} \quad 
$w \KK_i v$, and $e \KK^U_i f$, and $e' \KK^{U'}_i f'$;
\item 
$\bigl((w,e),e'\bigr) \HH_i^{\times\!\times} \bigl((v,f),f'\bigr)$%
\quad if{f}\\\strut\hfill 
$w \KK_i v$, and $e \KK^U_i f$, and $e' \KK^{U'}_i f'$, and $M\times U,(w,e)\models \vartheta'_i(e')$, and $M\times U,(v,f)\models \vartheta'_i(f')$;
\item 
$\bigl(w,(e,e')\bigr) \HH_i^{;} \bigl(v,(f,f')\bigr)$%
\quad if{f}\\\strut\hfill 
$w \KK_i v$, and $e \KK^U_i f$, and $e' \KK^{U'}_i f'$, and $M,w\models [U,e]\vartheta'_i(e')$, and $M,v\models [U,f]\vartheta'_i(f')$.
\end{itemize}
It is immediate that 
\begin{gather*}
\bigl((w,e),e'\bigr) \in \pi^{\times\!\times}(p) 
\quad\Longleftrightarrow\quad 
\bigl(w,(e,e')\bigr) \in \pi^{;}(p),
\\
\bigl((w,e),e'\bigr) \KK_i^{\times\!\times} \bigl((v,f),f'\bigr)
\quad\Longleftrightarrow\quad 
\bigl(w,(e,e')\bigr) \KK_i^{;} \bigl(v,(f,f')\bigr),
\\
\bigl((w,e),e'\bigr) \HH_i^{\times\!\times} \bigl((v,f),f'\bigr)
\quad\Longleftrightarrow\quad 
\bigl(w,(e,e')\bigr) \HH_i^{;} \bigl(v,(f,f')\bigr).
\end{gather*}
Thus, function
 $f\colon W^{\times\!\times} \to W^; $ defined by
\[
f\colon \bigl((w,e),e'\bigr) \mapsto \bigl(w,(e,e')\bigr)
\] 
is an isomorphism between these models.
It remains to note that 
$M, w \models [U,e][U',e']\varphi$%
\quad if{f}\quad 
$M \times U, (w,e) \models [U',e']\varphi$%
\quad if{f}\quad 
$(M \times U) \times U', \bigl((w,e),e'\bigr) \models \varphi$. Due to isomorphism $f$, this is equivalent to 
$M \times (U ; U'), \bigl(w,(e,e')\bigr)\models \varphi$%
\quad if{f}\quad 
$M, w\models \bigl[(U;U'),(e,e')\bigr]\varphi$.\qed
\end{itemize}
\end{proof}

Similarly to the previous section, one can show that every formula in $\lanhopekn^\priv$ is provably equivalent to a formula in $\lanhopekn$. For that \cref{definition:weight} can be adapted
by defining complexity of  hope update models~$U= (E,\vartheta,\KK^U)$ to be
$c(U) \ce \max \bigl\{ c\bigl(\vartheta_i(e)) \mid i \in \A, e \in E \bigr\}$ and replacing the last clause in  \cref{definition:weight} with
\[
c\bigl([U,e]\phi\bigr)
\ce 
\bigl(c(U) + |E|\bigr) \cdot c(\phi).
\]
where $|E|$ is the number of actions in hope update model~$U$. It can be shown that \cref{unequal} also holds for all axioms from \cref{axiomatization2}. Based on these complexity-decreasing left-to-right reductions, a translation $t \colon \lanhopekn^\priv \to \lanhopekn$ can be defined by analogy with \cref{translation}. Essentially the same argument  as in \cref{termination} shows that this translation is a terminating rewrite procedure. Thus:
\begin{proposition}[Termination]
\label{termination2}
For all $\phi \in \lanhopekn^\priv$, $t(\phi) \in \lanhopekn$.
\end{proposition}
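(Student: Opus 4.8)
The plan is to replicate the public-case development of \cref{translation,termination} in the richer setting of hope update models. First I fix a translation $t\colon\lanhopekn^\priv\to\lanhopekn$ by analogy with \cref{translation}: on the update-free fragment $t$ commutes with every connective (so $t(p)\ce p$, $t(\neg\phi)\ce\neg t(\phi)$, $t(\phi\et\xi)\ce t(\phi)\et t(\xi)$, $t(K_i\phi)\ce K_i t(\phi)$, and $t(H_i\phi)\ce H_i t(\phi)$), while on a formula $[U,e]\psi$ it first applies the matching left-to-right reduction axiom of \cref{axiomatization2} and then recurses. Explicitly, $t\bigl([U,e]p\bigr)\ce p$, $t\bigl([U,e]K_i\xi\bigr)\ce t\bigl(\Et_{e\KK^U_i f}K_i[U,f]\xi\bigr)$, $t\bigl([U,e]H_i\xi\bigr)\ce t\bigl(\vartheta_i(e)\imp\Et_{e\KK^U_i f}K_i(\vartheta_i(f)\imp[U,f]\xi)\bigr)$, and $t\bigl([U,e][U',e']\xi\bigr)\ce t\bigl([(U;U'),(e,e')]\xi\bigr)$, the clauses for $[U,e]\neg\xi$ and $[U,e](\xi\et\chi)$ being the evident analogues of those in \cref{translation}.

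The crux is to re-establish \cref{unequal} for the six axioms of \cref{axiomatization2} under the adapted complexity measure, where $c(U)\ce\max\{c(\vartheta_i(e))\mid i\in\A,e\in E\}$ and $c\bigl([U,e]\phi\bigr)\ce\bigl(c(U)+|E|\bigr)\cdot c(\phi)$. The propositional, negation, and conjunction axioms go through verbatim. For the knowledge axiom, the right-hand side is a nested conjunction of at most $|E|$ conjuncts $K_i[U,f]\xi$, each of complexity $\bigl(c(U)+|E|\bigr)c(\xi)+1$; the nesting adds at most $|E|-1$, and the required strict inequality collapses to $c(U)>0$, which holds since $c$ is positive-valued. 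For the hope axiom, the doubled $\vartheta_i$-guards and the conjunction over $f$ inflate the right-hand side, but the weight $4$ attached to $H_i$ in \cref{definition:weight} is exactly what absorbs this: the inequality reduces to $4\,c(U)+3\,|E|>6$, valid for $c(U),|E|\geq1$.

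The genuinely new case, and the main obstacle, is the composition axiom $[U,e][U',e']\varphi\eq\bigl[(U;U'),(e,e')\bigr]\varphi$, because here the update model itself is transformed. By \cref{def:comp}, $(U;U')$ has $|E|\cdot|E'|$ actions and hope update formulas $\vartheta''_i\bigl((e,e')\bigr)=[U,e]\vartheta'_i(e')$, whence $c\bigl(U;U'\bigr)=\bigl(c(U)+|E|\bigr)\cdot c(U')$. Consequently the right-hand side has complexity $\bigl(\bigl(c(U)+|E|\bigr)c(U')+|E|\,|E'|\bigr)\cdot c(\varphi)$, whereas the left-hand side has complexity $\bigl(c(U)+|E|\bigr)\bigl(c(U')+|E'|\bigr)\cdot c(\varphi)$; their difference is precisely $c(U)\cdot|E'|\cdot c(\varphi)>0$. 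The delicate point is that composition blows the action set up multiplicatively to size $|E|\,|E'|$, yet the additive ``$+|E|$'' summand in the definition of $c\bigl([U,e]\phi\bigr)$ is calibrated so that it recombines with $c(U;U')$ to remain strictly below the two nested factors on the left.

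With this extension of \cref{unequal} in place, termination follows by strong induction on $c(\phi)$, exactly as in \cref{termination}. The update-free clauses preserve membership in $\lanhopekn$ by the induction hypothesis. For a formula $[U,e]\psi$, the translation passes control to $t$ applied to the right-hand side $\theta_r$ of the governing axiom; since $c(\theta_r)<c\bigl([U,e]\psi\bigr)$ by the extended \cref{unequal}, the induction hypothesis yields $t(\theta_r)\in\lanhopekn$, and hence $t\bigl([U,e]\psi\bigr)\in\lanhopekn$. Thus every formula of $\lanhopekn^\priv$ is rewritten to an update-free normal form, as claimed.
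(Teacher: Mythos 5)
Your proposal follows exactly the route the paper takes (and merely sketches): adapt the complexity measure to $c\bigl([U,e]\phi\bigr)=\bigl(c(U)+|E|\bigr)\cdot c(\phi)$ with $c(U)=\max\{c(\vartheta_i(e))\mid i\in\A, e\in E\}$, re-establish \cref{unequal} for the axioms of \cref{axiomatization2}, define $t$ by analogy with \cref{translation}, and conclude by induction on complexity; moreover, your explicit inequalities (reduction to $c(U)>0$ for the knowledge axiom, to $4c(U)+3|E|>6$ for the hope axiom, and the difference $c(U)\cdot|E'|\cdot c(\varphi)>0$ for the composition axiom) all check out. This is essentially the paper's own argument with the calculations it leaves to the reader filled in correctly.
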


The same argument as in \cref{lem:trans} and \cref{th:pub_complete} yields
\begin{lemma}[Equiexpressivity]
\label{lem:private_express}
Language $\lanhopekn^\priv$ is equiexpressive with $\lanhopekn$, i.e., for all formulas $\phi \in \lanhopekn^\priv$, $\axKH^{\priv} \vdash \phi \leftrightarrow t(\phi)$.
\end{lemma}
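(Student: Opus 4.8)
The plan is to mirror the proof of \cref{lem:trans}: I would establish, by strong induction on the complexity $c(\phi)$ (using the adapted clause $c([U,e]\phi) \ce (c(U)+|E|)\cdot c(\phi)$), that $\axKH^\priv \vdash \phi \leftrightarrow t(\phi)$ for every $\phi \in \lanhopekn^\priv$. Since \cref{termination2} guarantees $t(\phi) \in \lanhopekn$, this provable equivalence immediately witnesses equiexpressivity. The two ingredients I would lean on are (i) the adapted \cref{unequal}, i.e.\ that for each reduction axiom of \cref{axiomatization2} the left-hand side strictly dominates the right-hand side in complexity, and (ii) the fact that $\axKH$ (hence $\axKH^\priv$) is a normal modal logic, so provable equivalence is a congruence: from $\vdash \xi \leftrightarrow \xi'$ one infers $\vdash \neg\xi \leftrightarrow \neg\xi'$, $\vdash K_i\xi \leftrightarrow K_i\xi'$, and --- via axiom $\KH$, which rewrites $H_i$ through $K_i$ and $\neg H_i\bot$ --- also $\vdash H_i\xi \leftrightarrow H_i\xi'$.

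For the base case $\phi = p$ we have $t(p)=p$ and the equivalence is trivial. When the outermost connective of $\phi$ is not a dynamic modality --- so $\phi$ is $\neg\xi$, $\xi\et\chi$, $K_i\xi$, or $H_i\xi$ --- the translation $t$ acts homomorphically, e.g.\ $t(H_i\xi)=H_i\,t(\xi)$. Each immediate subformula has strictly smaller complexity, so the induction hypothesis yields $\vdash\xi\leftrightarrow t(\xi)$ (and likewise for $\chi$), and the congruence noted above lifts this equivalence through the outermost connective. These cases are routine and close exactly as in \cref{lem:trans}.

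The substantive cases are those where $\phi=[U,e]\xi$. Here $t$ is defined so that $t([U,e]\xi)=t(\rho)$, where $\rho$ is precisely the right-hand side of the corresponding reduction axiom of \cref{axiomatization2} applied to $\xi$; for instance, when $\xi=K_j\eta$ we have $\rho=\bigwedge_{e\KK^U_j f} K_j[U,f]\eta$, and when $\xi=H_j\eta$ we have $\rho = \bigl(\vartheta_j(e)\imp\bigwedge_{e\KK^U_j f}K_j(\vartheta_j(f)\imp[U,f]\eta)\bigr)$, both finite conjunctions since $E$ is finite. The reduction axiom gives $\axKH^\priv\vdash[U,e]\xi\leftrightarrow\rho$ outright; the adapted \cref{unequal} gives $c(\rho)<c([U,e]\xi)$, so the induction hypothesis applies to $\rho$ to yield $\axKH^\priv\vdash\rho\leftrightarrow t(\rho)=t([U,e]\xi)$, and transitivity closes the case.

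The step I expect to be the genuine obstacle is the composition case $\xi=[U',e']\eta$, handled by the axiom $[U,e][U',e']\eta\leftrightarrow[(U;U'),(e,e')]\eta$. Syntactically this looks like it could fail to decrease complexity: the composite model $(U;U')$ of \cref{def:comp} carries hope update formulas $\vartheta''_i((e,e'))=[U,e]\vartheta'_i(e')$ that still contain the dynamic modality $[U,e]$, so the number of dynamic operators need not drop and a naive size measure would not shrink. This is exactly why the weighted measure $c([U,e]\phi)=(c(U)+|E|)\cdot c(\phi)$ is chosen: one must verify the strict inequality $c\bigl([(U;U'),(e,e')]\eta\bigr)<c\bigl([U,e][U',e']\eta\bigr)$ --- the analogue of the corresponding computation in \cref{unequal} --- before the induction hypothesis becomes applicable to $[(U;U'),(e,e')]\eta$. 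Establishing this multiplicative bound for the composed update model is the technical heart of the argument; once it is in place, the case closes by the same reduction-plus-induction-hypothesis pattern as the others.
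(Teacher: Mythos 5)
Your proposal is correct and takes essentially the same approach as the paper, which obtains \cref{lem:private_express} by exactly the argument of \cref{lem:trans}: induction on the adapted complexity measure $c$, the reduction axioms of \cref{axiomatization2}, and the analogue of \cref{unequal} for those axioms. The composition inequality you flag as the technical heart does hold for the chosen measure: since $c(U;U')=(c(U)+|E|)\,c(U')$ and $|E\times E'|=|E|\,|E'|$, one gets $c\bigl([(U;U'),(e,e')]\eta\bigr)=\bigl((c(U)+|E|)\,c(U')+|E|\,|E'|\bigr)\,c(\eta) < (c(U)+|E|)\,(c(U')+|E'|)\,c(\eta)=c\bigl([U,e][U',e']\eta\bigr)$, the difference being $c(U)\,|E'|\,c(\eta)>0$.
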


\begin{theorem}[Soundness and completeness]
\label{th:private_comp}
For all $\phi \in \lanhopekn^\priv$,
\[
\axKH^{\priv} \vdash \phi \qquad \Longleftrightarrow\qquad \KK\HH \models \phi.
\]
\end{theorem}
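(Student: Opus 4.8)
The plan is to mirror exactly the completeness argument already carried out for the public case in \cref{th:pub_complete}, since the excerpt has deliberately set up all the analogous machinery for the private language. Soundness is already established in \cref{theorem:soundness2}, so only the left-to-right direction, i.e.\ completeness, needs attention: I would assume $\KK\HH \models \phi$ for an arbitrary $\phi \in \lanhopekn^{\priv}$ and derive $\axKH^{\priv} \vdash \phi$. The backbone of the argument is the equiexpressivity result \cref{lem:private_express}, which provides $\axKH^{\priv} \vdash \phi \leftrightarrow t(\phi)$ together with $t(\phi) \in \lanhopekn$ via the termination \cref{termination2}.

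Concretely, I would proceed in the following steps. First, from $\axKH^{\priv} \vdash \phi \leftrightarrow t(\phi)$ (\cref{lem:private_express}) and soundness (\cref{theorem:soundness2}) I obtain $\KK\HH \models \phi \leftrightarrow t(\phi)$. Combining this with the hypothesis $\KK\HH \models \phi$ yields $\KK\HH \models t(\phi)$. Second, since $t(\phi)$ is a formula of the base language $\lanhopekn$, I invoke completeness of the base system, \cref{theorem:sckh}, to conclude $\axKH \vdash t(\phi)$, and hence $\axKH^{\priv} \vdash t(\phi)$ because $\axKH^{\priv}$ extends $\axKH$. Third, I feed this back through the equivalence $\axKH^{\priv} \vdash \phi \leftrightarrow t(\phi)$ to derive $\axKH^{\priv} \vdash \phi$, completing the proof. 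The converse implication is just soundness, already in hand.

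The genuine work here is not in this final theorem at all but in the supporting results that the excerpt states without full proof for the private case. The real obstacle is verifying that \cref{unequal} (the strict complexity decrease $c(\theta_l) > c(\theta_r)$) continues to hold for the new reduction axioms of \cref{axiomatization2} under the modified complexity measure $c\bigl([U,e]\phi\bigr) \ce \bigl(c(U) + |E|\bigr)\cdot c(\phi)$. In particular the $K_i$ and $H_i$ axioms now produce \emph{conjunctions} $\Et_{e \KK^U_i f}$ ranging over up to $|E|$ accessible actions rather than a single subformula, and the composition axiom replaces $(U;U')$ whose action set has size $|E|\cdot|E'|$ and whose update formulas $\vartheta''_i((e,e')) = [U,e]\vartheta'_i(e')$ are themselves syntactically larger; the chosen coefficient $c(U)+|E|$ is precisely what is needed to dominate these blow-ups. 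Once those complexity inequalities are checked, termination (\cref{termination2}) follows by induction on $c(\phi)$ exactly as in \cref{termination}, and equiexpressivity (\cref{lem:private_express}) follows by the same induction as \cref{lem:trans}, using the new axioms to justify each rewrite step. Since the excerpt explicitly grants that all of these carry over, the final theorem itself reduces to the three-line reduction-to-the-base-case argument above, and I would present it in precisely that form.
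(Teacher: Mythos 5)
Your proposal is correct and follows exactly the paper's own route: soundness from \cref{theorem:soundness2}, and completeness by translating $\phi$ into the base language via \cref{lem:private_express} and \cref{termination2}, invoking completeness of $\axKH$ (\cref{theorem:sckh}), and transferring the result back through the provable equivalence $\phi \leftrightarrow t(\phi)$, precisely as in \cref{th:pub_complete}. Your observation that the substantive work lies in verifying the complexity inequalities of \cref{unequal} for the new axioms under the measure $c\bigl([U,e]\phi\bigr) = \bigl(c(U)+|E|\bigr)\cdot c(\phi)$ matches what the paper itself acknowledges but leaves to the reader.
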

Finally, as in \cref{cor:nec}, necessitation for private hope update is an admissible inference rule in~$\axKH^{\priv}$. In other words, if $\axKH^{\priv} \vdash \phi$, then $\axKH^{\priv} \vdash [U,e]\phi$.

\section{Factual Change} \label{sec.factual}

In this section, we provide a way to add factual change to our model updates.  
This is going along well-trodden paths in dynamic epistemic logic \cite{hvdetal.aamas:2005,jfaketal.lcc:2006,hvdetal.world:2008}.

\subsection{Syntax, Semantics, and Axiomatization}

\begin{definition}[Hope update model with factual change] \label{def.defdefdef}
To obtain a \emph{hope update model with factual change} $U = (E,\vartheta,\sigma,\KK^U)$ from a hope update model $(E,\vartheta,\KK^U)$ for a language $\lang$ we add  parameter $\sigma: E \imp (\Prop \imp \lang)$. We require that each $\sigma(e)$ be only finitely different from the identity function, i.e., that the set $\{p \in \Prop \mid \sigma(e)(p) \ne p\}$ be finite for each $e \in E$.
\end{definition}
The finitary requirement is needed in order to keep the language well-defined. 

\begin{definition}[Language $\lanhopekn^f$] \label{def.yy}
Language $\lanhopekn^f$ is defined by the grammar that looks like the one in \cref{def.y} except that $(U,e)$ in the clause $[U,e] \phi$ here is a pointed hope update model with factual change for the language $\lanhopekn^f$. 
\end{definition}
As in the previous section, \cref{def.yy} is given by mutual recursion and from here on all hope update models are hope update models with factual change for  language $\lanhopekn^f$.

\begin{definition}[Semantics]\label{def.semanticsfact}
Let $U = (E,\vartheta,\sigma,\KK^U)$ be a hope update model, $M = (W,\pi,\KK, \HH) \in \KK\HH$, $w \in W$, and $e \in E$. Then, the only new clause compared in \cref{def.semanticssemi} is replaced by a different update mechanism
\[
M,w \models [U,e]\phi \quad \text{if{f}}\quad M \otimes U, (w,e) \models \phi,
\] with $M \otimes U = (W^\times,\pi^\otimes,\KK^\times, \HH^\times)$ with the same $W^\times$, $\KK^\times$, and $\HH^\times$ as in \cref{def.semanticssemi} and such that:
\[
\begin{array}{l@{\quad}l@{\quad}l}
W^\times  &\ce &  W \times E;
\\
(w,e) \in \pi^\otimes(p)  &\text{if{f}} &M,w \models \sigma(e)(p); \\
(w,e) \KK^\times_i (v,f) &\text{if{f}} & w \KK_i v \text{ and } e \KK^U_i f;\\
\smallskip
(w,e) \HH^\times_i (v,f) &\text{if{f}} &(w,e) \KK^\times_i (v,f), \text{ and } M,w \models \vartheta_i(e), \text{ and } M,v \models \vartheta_i(f).
\end{array}
\]
\end{definition}
The only difference between \cref{def.semanticssemi,def.semanticsfact} is that the clause for the valuation~$\pi^{\times}$ of the former is: $(w,e) \in \pi^\times(p)$ if{f} $w \in \pi(p)$. 
In other words, there the valuation of facts does not change, and the valuation in the world $w$ is carried forward to that in the updated worlds $(w,e)$. Since class~$\KK\HH$ has no restrictions on valuations, it follows from \cref{proposition:semi-privateupdatestayinkh} that $M \otimes U \in \KK\HH$ whenever $M  \in \KK\HH$.

To follow the familiar pattern of reduction axioms from $\axKH^\pub$ and $\axKH^\priv$, we first need to adapt the composition operation. For the composition $U;U' = (E'', \vartheta'', \sigma'', \KK^{U;U'})$ of hope update models $U = (E, \vartheta, \sigma, \KK^U)$ and $U' = (E', \vartheta', \sigma', \KK^{U'})$ with factual change,  the new parameter~$\sigma''$ needs to be added to \cref{def:comp} (cf.~\cite{hvdetal.world:2008}): for any $(e,e'), (f,f') \in  E \times E'$, 
\[\begin{array}{l@{\quad}l@{\quad}l}
E'' & \ce & E \times E' \\
\smallskip
\vartheta''_i\bigl((e,e')\bigr) & \ce & [U,e] \vartheta_i'(e')\\
\smallskip
\sigma''\bigl((e,e')\bigr)(p) & \ce & 
\begin{cases}
[U,e]\sigma'(e')(p) & \text{if $\sigma'(e')(p) \ne p$},\\
\sigma(e)(p) & \text{if $\sigma'(e')(p) = p$ but $\sigma(e)(p) \ne p$},\\
p & \text{if $\sigma'(e')(p) = \sigma(e)(p) = p$}
\end{cases}
\\
\smallskip
(e,e') \KK^{U ; U'}_i (f,f') & \text{if{f}} & e \KK^{U}_i f \text{ and } e' \KK^{U'}_i f'
\end{array}\]

With this upgrade to the composition of hope update models, the only required change to the axiom system $\axKH^\priv$ from \cref{axiomatization2} is replacing the first equivalence  with $[U,e]p \eq \sigma(e)(p)$:
\begin{definition}[Axiomatization $\axKH^{f}$]
\label{axiomatization3}
$\axKH^{f}$ extends $\axKH$ with axioms 
\[\begin{array}{lll@{\qquad}lll}
[U,e]p &\eq& \sigma(e)(p)
& {[U,e]}\neg \varphi &\eq& \neg [U,e]\varphi 
\\ 
{[U,e]}(\varphi \et \psi) &\eq& [U,e]\varphi \et [U,e]\psi 
& 
{[U,e]}[U',e']\varphi &\eq& \bigl[(U;U'),(e,e')\bigr]\varphi 
\\[.5ex] 
{[U,e]} K_i \varphi &\eq& \Et\limits_{e \KK^U_i f} K_i [U,f]\varphi 
&
{[U,e]} H_i \varphi &\eq& \left(\vartheta_i(e) \imp \Et\limits_{e \KK^{U}_i f} K_i \bigl(\vartheta_i(f) \imp [U,f]\phi\bigr)\right) 
\end{array}\]
where $\phi, \psi \in \lanhopekn^{f}$, $(U', e')$ is a pointed hope update model with factual change, $p \in \Prop$, $i \in \mathcal{A}$,  $U = (E,\vartheta,\sigma,\KK^{U\mathstrut})$ is a hope update model with factual change, and $e,f \in E$.
\end{definition}

\begin{theorem}[Soundness]
For all $\phi \in \lanhopekn^f$, $\axKH^f \vdash \phi$ implies $\KK\HH \models \phi$.
\end{theorem}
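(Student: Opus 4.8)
The plan is to follow exactly the pattern established in the soundness proofs for $\axKH^{\pub}$ (\cref{theorem:soundness}) and $\axKH^{\priv}$ (\cref{theorem:soundness2}). By \cref{theorem:sckh}, the base axioms of $\axKH$ are already sound with respect to $\KK\HH$, and since we have noted that $M \otimes U \in \KK\HH$ whenever $M \in \KK\HH$ (via \cref{proposition:semi-privateupdatestayinkh}), it suffices to verify that each of the six reduction axioms of $\axKH^{f}$ is valid on $\KK\HH$. I would fix an arbitrary $M = (W,\pi,\KK,\HH) \in \KK\HH$ and $w \in W$, together with hope update models $U$ and $U'$ with factual change, and chase the semantic equivalences for each axiom.

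The five axioms for negation, conjunction, $K_i$, $H_i$, and composition are \emph{verbatim identical} to those in \cref{axiomatization2}, and their validity arguments are untouched by factual change: the clauses for $\KK^\times_i$ and $\HH^\times_i$ in \cref{def.semanticsfact} coincide with those in \cref{def.semanticssemi}, so the chains of if-and-only-ifs carry over word for word. Hence for these five I would simply say they are established exactly as in the proof of \cref{theorem:soundness2}. The only genuinely new cases are (i) the atomic axiom $[U,e]p \eq \sigma(e)(p)$ and (ii) re-checking the composition axiom $[U,e][U',e']\varphi \eq \bigl[(U;U'),(e,e')\bigr]\varphi$, since composition now carries the extra parameter $\sigma''$.

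For the atomic axiom, the argument is a one-line unfolding: $M,w \models [U,e]p$ iff $M \otimes U, (w,e) \models p$ iff $(w,e) \in \pi^\otimes(p)$ iff $M,w \models \sigma(e)(p)$, where the last step is precisely the new valuation clause of \cref{def.semanticsfact}. (Note that $\sigma(e)(p) \in \lanhopekn^f$ is itself a formula, so the right-hand side is meaningful.)

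The main obstacle, and the only step requiring real care, is the composition axiom. As in \cref{theorem:soundness2}, I would show that $(M \otimes U) \otimes U'$ and $M \otimes (U;U')$ are isomorphic via $\bigl((w,e),e'\bigr) \mapsto \bigl(w,(e,e')\bigr)$. The domains, the $\KK$-relations, and the $\HH$-relations are handled identically to the private case, since these components of the product are unchanged. What is new is checking that the \emph{valuations} agree under the isomorphism, i.e.\ that $\bigl((w,e),e'\bigr) \in \pi^{\otimes\otimes}(p)$ iff $\bigl(w,(e,e')\bigr) \in \pi^{\otimes;}(p)$. Unfolding the left side gives $M \otimes U, (w,e) \models \sigma'(e')(p)$, while the right side gives $M,w \models \sigma''\bigl((e,e')\bigr)(p)$; the point is that the three-way case split defining $\sigma''$ is engineered exactly so these coincide. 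In the first case $\sigma'(e')(p) \ne p$, one uses $M,w \models [U,e]\sigma'(e')(p)$ iff $M \otimes U,(w,e) \models \sigma'(e')(p)$ (the semantics of $[U,e]$); in the second case $\sigma'(e')(p)=p$ but $\sigma(e)(p)\ne p$, the inner update leaves $p$ unchanged so $M \otimes U,(w,e)\models p$ iff $(w,e) \in \pi^\otimes(p)$ iff $M,w \models \sigma(e)(p) = \sigma''\bigl((e,e')\bigr)(p)$; in the third case both are the identity and both reduce to $M,w \models p$. Having confirmed the valuations match (the $\KK$- and $\HH$-parts being as before), the isomorphism is complete, and the axiom's validity follows by the same final unfolding as in \cref{theorem:soundness2}: $M,w \models [U,e][U',e']\varphi$ iff $(M \otimes U)\otimes U', \bigl((w,e),e'\bigr) \models \varphi$ iff, by the isomorphism, $M \otimes (U;U'), \bigl(w,(e,e')\bigr) \models \varphi$ iff $M,w \models \bigl[(U;U'),(e,e')\bigr]\varphi$.
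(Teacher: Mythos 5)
Your proof is correct and takes essentially the same route as the paper's: the non-atomic axioms carry over verbatim from the soundness proof for $\axKH^\priv$, the atomic axiom is the same one-line unfolding, and the composition axiom is handled via the same isomorphism $\bigl((w,e),e'\bigr) \mapsto \bigl(w,(e,e')\bigr)$. The only difference is that you work out the three-case valuation argument for $\sigma''$ explicitly (correctly, in each case), whereas the paper delegates exactly that step to a citation of \cite[Prop.~2.9]{hvdetal.world:2008}.
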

\begin{proof}
For most of the new axioms the proof of \cref{theorem:soundness2} transfers to this case verbatim. To show the validity of ${[U,e]}[U',e']\varphi \eq \bigl[(U;U'),(e,e')\bigr]\varphi$, the proof follows along the same lines by showing that $(M \otimes U)  \otimes U'$ is isomorphic to $M \otimes (U;U')$, with the argument for the valuations replaced with that from \cite[Prop.~2.9]{hvdetal.world:2008}.  Finally, it is easy to see that 
$\KK\HH \models [U,e]p \eq \sigma(e)(p)$,
as  $M, w \models [U,e]p$ if{f} $M \otimes U, (w,e) \models p$ if{f} $(w,e) \in \pi^\otimes(p)$ if{f} $M,w \models \sigma(e)(p)$. 
\qed
\end{proof}
In itself it is quite remarkable that the required changes are fairly minimal, given the enormously enhanced flexibility in specifying distributed system behavior. From this point the techniques used for $\axKH^\priv$ apply with barely a change to factual change. The same arguments as for \cref{lem:private_express} (for an appropriately modified complexity measure) and \cref{th:private_comp} yield the analogous statements for $\lanhopekn^f$, once again with the admissibility of necessitation rule as a corollary:

\begin{lemma}[Equiexpressivity]
Language $\lanhopekn^f$ is equiexpressive with $\lanhopekn$.
\end{lemma}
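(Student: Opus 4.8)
The plan is to follow the now-familiar reduction strategy used for \cref{lem:trans} and \cref{lem:private_express}: define a translation $t\colon\lanhopekn^f\to\lanhopekn$ that drives the reduction axioms of $\axKH^f$ from left to right, equip $\lanhopekn^f$ with a well-founded complexity measure under which every such rewrite strictly decreases complexity, and then prove $\axKH^f\vdash\phi\leftrightarrow t(\phi)$ by induction on that measure. Since $\lanhopekn\subseteq\lanhopekn^f$, the converse inclusion of expressive power is trivial, so equiexpressivity reduces entirely to exhibiting, for each $\phi\in\lanhopekn^f$, a provably equivalent $t(\phi)\in\lanhopekn$.

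First I would define $t$ exactly as in \cref{translation}, keeping all clauses for negation, conjunction, $K_i$, $H_i$, and the composition axiom unchanged, and replacing only the atomic-update clause by $t\bigl([U,e]p\bigr)\ce t\bigl(\sigma(e)(p)\bigr)$ to match the new axiom $[U,e]p\eq\sigma(e)(p)$. The complexity measure from \cref{definition:weight} must be adapted so that $c(U)$ also records the factual-change formulas:
\[
c(U)\ce\max\bigl\{\, c(\vartheta_i(e)),\ c\bigl(\sigma(e)(p)\bigr) \ \big|\ i\in\A,\ e\in E,\ p\in\Prop,\ \sigma(e)(p)\ne p \,\bigr\},
\]
which is well defined because each $\sigma(e)$ differs from the identity on only finitely many atoms; I keep $c\bigl([U,e]\phi\bigr)\ce\bigl(c(U)+|E|\bigr)\cdot c(\phi)$.

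The heart of the argument is the analogue of \cref{unequal}: for every axiom $\theta_l\eq\theta_r$ of $\axKH^f$, $c(\theta_l)>c(\theta_r)$. The verifications for negation, conjunction, $K_i$, and $H_i$ are purely algebraic in $c(U)$, $|E|$, $c(\phi)$, and the (dominated) quantities $c(\vartheta_i(\cdot))\le c(U)$, so they transfer verbatim from $\axKH^\priv$ regardless of the enlarged $c(U)$. The genuinely new case is the atomic axiom, where $c\bigl([U,e]p\bigr)=c(U)+|E|>c(U)\ge c\bigl(\sigma(e)(p)\bigr)$ since $|E|\ge1$. I expect the composition axiom to be the one case that still deserves a careful recheck, because the composed model $(U;U')$ now carries the extra factual-change formulas $\sigma''$, and one must confirm these do not inflate $c(U;U')$ beyond the budget. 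Writing out the bounds, the update formulas $\vartheta''_i\bigl((e,e')\bigr)$ and the nontrivial entries $\sigma''\bigl((e,e')\bigr)(p)$ each have the form $[U,e]\chi$ with $c(\chi)\le c(U')$, hence complexity $\le\bigl(c(U)+|E|\bigr)\cdot c(U')$, while the carried-over $\sigma(e)(p)$ satisfy $c\bigl(\sigma(e)(p)\bigr)\le c(U)\le\bigl(c(U)+|E|\bigr)\cdot c(U')$. Thus $c(U;U')\le\bigl(c(U)+|E|\bigr)\cdot c(U')$, and the required inequality $\bigl(c(U)+|E|\bigr)\bigl(c(U')+|E'|\bigr)>c(U;U')+|E|\,|E'|$ reduces to $c(U)>0$, which holds because all complexities are at least $1$. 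This is the step I would treat as the main obstacle, since it is the only place where the factual-change bookkeeping in $\sigma''$ interacts with the complexity bound.

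With this inequality in hand, termination ($t(\phi)\in\lanhopekn$) follows by induction on $c(\phi)$ exactly as in \cref{termination,termination2}, and the provable equivalence $\axKH^f\vdash\phi\leftrightarrow t(\phi)$ then follows by a second induction on $c(\phi)$: each reduction axiom is an axiom of $\axKH^f$, its right-hand side has strictly smaller complexity, so the induction hypothesis applies to it, and the step closes just as in \cref{lem:private_express} (the atomic case using $\axKH^f\vdash[U,e]p\leftrightarrow\sigma(e)(p)$ together with the hypothesis on $\sigma(e)(p)$). This yields the claimed equiexpressivity, and — exactly as in \cref{th:private_comp} — combining it with the soundness theorem for $\axKH^f$ and the completeness of $\axKH$ on $\lanhopekn$ would in addition deliver soundness and completeness of $\axKH^f$.
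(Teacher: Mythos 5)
Your proposal is correct and takes essentially the same route as the paper: the paper's own proof of this lemma is just the remark that the techniques for $\axKH^{\priv}$ (translation, a suitably modified complexity measure, and complexity-decreasing reduction axioms) carry over ``with barely a change,'' which is precisely the plan you execute. You in fact supply details the paper leaves implicit --- enlarging $c(U)$ to include the formulas $\sigma(e)(p)$ with $\sigma(e)(p)\ne p$, the new atomic case $c\bigl([U,e]p\bigr)=c(U)+|E|>c\bigl(\sigma(e)(p)\bigr)$, and the check that the factual-change entries of $\sigma''$ in the composition stay within the bound $\bigl(c(U)+|E|\bigr)\cdot c(U')$ so that the composition inequality still reduces to $c(U)>0$ --- and all of these verifications are correct.
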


\begin{theorem}[Soundness and completeness]
For all $\phi \in \lanhopekn^f$, 
\[
\axKH^f \vdash \phi
\qquad \Longleftrightarrow \qquad
\KK\HH \models \phi.
\]
\end{theorem}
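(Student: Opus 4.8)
The plan is to reuse verbatim the reduction-to-base-logic template already deployed for \cref{th:pub_complete} and \cref{th:private_comp}. The right-to-left (soundness) direction requires nothing new: it is precisely the preceding Soundness Theorem for $\lanhopekn^f$. So the only work is the left-to-right (completeness) direction, and for that I would assume $\KK\HH \models \phi$ and push the formula down into the knowledge-and-hope fragment $\lanhopekn$ via the translation $t$, where the base completeness result \cref{theorem:sckh} can be applied.

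Concretely, the completeness argument is a four-link chain. First, the preceding Equiexpressivity Lemma for $\lanhopekn^f$ (whose intended content, by analogy with \cref{lem:private_express}, is the provable equivalence) gives $\axKH^f \vdash \phi \eq t(\phi)$ with $t(\phi) \in \lanhopekn$. Second, applying the Soundness Theorem to this provable biconditional yields $\KK\HH \models \phi \eq t(\phi)$, whence the hypothesis $\KK\HH \models \phi$ gives $\KK\HH \models t(\phi)$. Third, since $t(\phi)$ lies in the base language $\lanhopekn$, completeness of the base calculus (\cref{theorem:sckh}) provides $\axKH \vdash t(\phi)$, and as $\axKH^f$ extends $\axKH$ we obtain $\axKH^f \vdash t(\phi)$. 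Fourth, combining $\axKH^f \vdash t(\phi)$ with the biconditional from the first step closes the argument: $\axKH^f \vdash \phi$.

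Because the Equiexpressivity Lemma is already in hand, this theorem is itself essentially a three-line bookkeeping exercise; the genuine difficulty sits upstream, in that lemma and its supporting termination result, and is where I expect the only real obstacle. The one truly new wrinkle relative to the private case is the atomic reduction axiom, which for factual change reads $[U,e]p \eq \sigma(e)(p)$ rather than $[U,e]p \eq p$: the right-hand side is now a possibly complex formula, so the complexity measure $c$ must be enlarged so that $c([U,e]p) > c(\sigma(e)(p))$ continues to hold. I would do this by letting the complexity of a hope update model with factual change also take the maximum over its factual-change formulas, $c(U) \ce \max\{c(\vartheta_i(e)),\, c(\sigma(e)(p)) \mid i \in \A,\, e \in E,\, p \in \Prop\}$; this maximum is finite precisely because each $\sigma(e)$ differs from the identity on only finitely many atoms, so all but finitely many $\sigma(e)(p)$ contribute complexity $1$. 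With that recalibration I would re-run the inequalities of \cref{unequal} and the termination argument of \cref{termination}, checking in particular that the composition axiom $[U,e][U',e']\varphi \eq [(U;U'),(e,e')]\varphi$ still strictly decreases complexity and that its soundness now also rests on the amended composition of the $\sigma$ maps. Everything else transfers from \cref{th:private_comp} without change.
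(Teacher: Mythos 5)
Your proposal takes essentially the same route as the paper: the paper proves this theorem simply by invoking ``the same arguments as for \cref{lem:private_express} (for an appropriately modified complexity measure) and \cref{th:private_comp},'' which is exactly the four-link reduction chain (equiexpressivity, soundness of the biconditional, base completeness via \cref{theorem:sckh}, and transfer back through the biconditional) that you describe. Your explicit recalibration of $c(U)$ to also take the maximum over the formulas $\sigma(e)(p)$ --- finite thanks to the finitary requirement on $\sigma$ --- is precisely the ``appropriately modified complexity measure'' that the paper leaves implicit, so your account is correct and, if anything, more detailed than the original.
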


\subsection{Applications}

The importance of adding factual change to our framework comes from the fact that, in practical protocols implementing
FDIR mechanisms, agents usually take decisions based on what they recorded in their local states. 
We demonstrate the essentials of combined hope updates and state recovery in \cref{example.reset}, which combines the variant of
self-correction introduced in \cref{example.selfcorrect} with state recovery needs
that would arise in the alternating bit protocol~\cite{halpernzuck:1992}.

\begin{example}[Private self-correction with state recovery] 
\label{example.reset}
The alternating bit protocol (ABP) for transmitting an arbitrarily generated stream of consecutive data packets $d_1,d_2,\dots$ from a sender to a receiver over an unreliable communication channel uses messages that additionally contain a sequence number consisting of 1 bit only. The latter switches from one message to the next, by alternating atomic propositions $q_s$ and $q_r$ containing the next sequence number to be used for the next message generated by the sender resp. receiver side of the channel. 
In addition, the sender maintains atomic proposition $p_s$, using the difference between the two to kick-start sending of the next packet. The receiver would not need this second bit in the absence of faults. We use $p_r$ for
self-correction, however, in the sense that we assume that it provides a reliable backup
for $q_r$. In the fault-free case, it will be maintained such that the invariant $p_r \neq q_r$ holds.
If the receiver becomes faulty, we assume that its FDIR unit supplies the value
$q_r$ needs to be recovered to as~$\neg p_r$.

Let us describe several consecutive steps of how the ABP should operate in more detail with agent $s$ being the sender and agent $r$ the receiver. Suppose agents have the values $(q_s,q_r)=(0,0)$ and $(p_s,p_r)=(1,1)$ of their local variables when the sending of data packet $d_n$ begins. The sending of $d_n$ and the next packet $d_{n+1}$ happens in six phases (three per packet)~(\cite{halpernzuck:1992}), where we describe actions of each agent in term of its local variables: 
\begin{enumerate}[(i)]
\item\label{phase_one} Since $q_s\neq p_s$ (here $0 \ne 1$),  sender~$s$ sets $p_s \ce q_s=0$ and generates a message \mbox{$(d_n,p_s)=(d_n,0)$} to be repeatedly sent to $r$. \\\strut\hfill Local values in this phase are  $(q_s,q_r)=(0,0)$ and $(p_s,p_r)=(0,1)$.
\item When receiver~$r$ receives $(d_n,q_r)=(d_n,0)$, it records $d_n$, generates a message\linebreak $(ack,q_r)=(ack,0)$ to be repeatedly sent back to $s$, and switches to the next sequence number $q_r\ce1-q_r=1$, also updating the backup $p_r \ce 1-p_r=0$.  \\\strut\hfill Local values in this phase are  $(q_s,q_r)=(0,1)$ and $(p_s,p_r)=(0,0)$.

\item\label{phase_three} When  sender $s$ receives $(ack,p_s)=(ack,0)$,  sender switches to the next sequence number $q_s\ce 1- p_s=1$ and next data packet $d_{n+1}$. \\\strut\hfill Local values in this phase are  $(q_s,q_r)=(1,1)$ and $(p_s,p_r)=(0,0)$.
\item Since $q_s\neq p_s$ (here $1 \ne 0$),  sender~$s$ sets $p_s \ce q_s=1$ and generates a message\linebreak \mbox{$(d_{n+1},p_s)=(d_{n+1},1)$} to be repeatedly sent to $r$. \\\strut\hfill 
Local values in this phase are  $(q_s,q_r)=(1,1)$ and $(p_s,p_r)=(1,0)$.
\item When receiver~$r$ receives $(d_{n+1},q_r)=(d_{n+1},1)$, it records $d_{n+1}$, generates a message\linebreak $(ack,q_r)=(ack,1)$ to be repeatedly sent back to $s$, and switches to the next sequence number $q_r\ce1-q_r=0$, also updating the backup $p_r \ce 1-p_r=1$.
 \\\strut\hfill Local values in this phase are  $(q_s,q_r)=(1,0)$ and $(p_s,p_r)=(1,1)$.

\item When  sender $s$ receives $(ack,p_s)=(ack,1)$, sender switches to the next sequence number $q_s\ce 1- p_s=0$. \hfill Local values in this phase are  $(q_s,q_r)=(0,0)$ and $(p_s,p_r)=(1,1)$.
\end{enumerate}
At this point, all local variables have returned to their values when $s$ had started sending packet~$d_n$, and the cycle repeats again and again.  Thus, during a correct run of the ABP, values of $(q_s,q_r)$ continuously cycle through $(0,0)$, $(0,1)$, $(1,1)$, $(1,0)$, $(0,0)$. Note also that, $p_r \ne q_r$ throughout any correct run of the protocol, enabling to retrieve a correct value of $q_r$ from backup $p_r$. By contrast, $p_s = q_s$ can happen, creating an asymmetry between sender and receiver.

If a transient fault would flip the value of either $q_s$ or $q_r$, the ABP can deadlock and, therefore, would require correction. For instance, if $q_s$ flips from $1$ to $0$ at the end of phase~\eqref{phase_three}, the condition~$p_s \ne q_s$ for the start of sending $d_{n+1}$ would never be fulfilled.

Due to the above mentioned invariant $p_r \ne q_r$, 
the need for a correction of receiver  (in case $q_r$~has accidentally flipped)
can be conveniently determined by checking whether $p_r=q_r$, while the correction itself can be performed by just setting $q_r\ce 1- p_r$.

To model this self-correction in our logic, we treat boolean variables $p_r$, $q_r$, $p_s$, and $q_s$ as atomic propositions so that $p_r = q_r$ becomes $p_r \eq q_r$ and $q_r\ce 1- p_r$ looks like $q_r \ce \neg p_r$.
Accordingly, 
we model agent $r$ successfully self-correcting and recovering its state based on the condition $p_r \leftrightarrow q_r$.
At the same time, $s$ is uncertain whether $r$~has corrected itself (event $scr_{p_r = q_r}$) or not (event $noscr$). Again writing $\vartheta(e)$ as $\bigl((\vartheta_a(e),\vartheta_b(e)\bigr)$, this is encoded in the hope update model $U\ce(E, \vartheta, \sigma, \KK^{U})$, where:

\[\begin{array}{lcl@{\quad}|@{\quad}lcl}
E & \ce & \{scr_{p_r = q_r}, noscr\} & \sigma(scr_{p_r = q_r})(q_r)&\ce& \lnot p_r  \\
\vartheta (scr_{p_r = q_r}) &\ce &\bigl(\neg H_s \bot, \neg H_r \bot \lor (p_r \leftrightarrow q_r)\bigr) & \KK^{U}_{s} &\ce& E \times E \\
\vartheta (noscr) &\ce& (\neg H_s \bot,\neg H_r \bot)  & \KK^{U}_{r} &\ce& \{(e,e) \mid e \in E\}
\end{array}\]
Note that $H_r \bot$ is  equivalent to $p_r \leftrightarrow q_r$, making $H_r \bot$ locally detectable by $r$ and resulting in $\vartheta (scr_{p_r = q_r}) = (\neg H_s \bot, \top)$. In other words, agent $r$ is guaranteed to become correct whenever this update is applied. All atoms for  $noscr$ and all atoms other than $q_r$ for  $scr_{p_r = q_r}$ remain unchanged by $\sigma$.
Coding the atoms in each state as $p_sq_s.p_rq_r$, the resulting update is:
\[
\begin{tikzpicture}[z=0.35cm]
\node (000) at (0,0,0) {$\ourpmb{00}.00$};
\node (010) at (0,3,0) {$\ourpmb{00}.\ourpmb{01}$};
\node (100) at (3,0,0) {$\ourpmb{01}.00$};
\node (110) at (3,3,0) {$\ourpmb{01}.\ourpmb{01}$};
\draw (000) -- node[fill=white,inner sep=1pt] {$r$} (100);
\draw (000) -- node[fill=white,inner sep=1pt] {$s$} (010);
\draw (010) -- node[fill=white,inner sep=1pt] {$r$} (110);
\draw (100) -- node[fill=white,inner sep=1pt] {$s$} (110);
\end{tikzpicture}
\quad \raisebox{.4cm}{\Large $\times$} \quad 
\begin{tikzpicture}[z=0.35cm]
\node (000) at (0,0,0) {$scr_{p_r = q_r}$};
\node (010) at (0,0,4) {$noscr$};
\draw (000) -- node[fill=white,inner sep=1pt] {$s$} (010);
\end{tikzpicture}
\quad \raisebox{.4cm}{\Large $=$} \quad 
\begin{tikzpicture}[z=0.35cm]
\node (000) at (0,0,0) {$\ourpmb{00}.\ourpmb{01}$};
\node (001) at (0,0,3) {$\ourpmb{00}.00$};
\node (010) at (0,3,0) {$\ourpmb{00}.\ourpmb{01}$};
\node (011) at (0,3,3) {$\ourpmb{00}.\ourpmb{01}$};
\node (100) at (3,0,0) {$\ourpmb{01}.\ourpmb{01}$};
\node (101) at (3,0,3) {$\ourpmb{01}.00$};
\node (110) at (3,3,0) {$\ourpmb{01}.\ourpmb{01}$};
\node (111) at (3,3,3) {$\ourpmb{01}.\ourpmb{01}$};
\draw (000) -- node[fill=white,inner sep=1pt] {$r$} (100);
\draw (001) -- node[fill=white,inner sep=1pt] {$r$} (101);
\draw (011) -- node[fill=white,inner sep=1pt] {$r$} (111);

\draw (000) -- node[fill=white,inner sep=1pt] {$s$} (010);
\draw (001) -- node[fill=white,inner sep=1pt] {$s$} (011);
\draw (101) -- node[fill=white,inner sep=1pt] {$s$} (111);

\draw (000) -- node[fill=white,inner sep=1pt] {$s$} (001);
\draw (010) -- node[fill=white,inner sep=1pt] {$s$} (011);
\draw (100) -- node[fill=white,inner sep=1pt] {$s$} (101);
\draw (110) -- node[fill=white,inner sep=1pt] {$s$} (111);
\draw (010) -- node[fill=white,inner sep=1pt] {$r$} (110);
\draw (100) -- node[fill=white,inner sep=1pt] {$s$} (110);
\end{tikzpicture}
\]
The only change happens in 
global states $\ourpmb{00}.00$ and $\ourpmb{01}.00$ where $p_r \leftrightarrow q_r$ causes the hope update and $q_r$ is set to be the opposite of~$p_r$. 
After the update, we get:
\[\begin{array}{l@{\quad}l}
M, {00}.00 \models [U,scr_{p_r = q_r}] (\neg H_{r} \bot \land  K_r q_r) & \text{$r$ has corrected herself and learned the right value of $q_r$}\\
M, {00}.00 \models [U,scr_{p_r = q_r}]  K_{r} \neg H_{r} \bot & \text{$r$ is now sure she is correct}\\
M, {00}.00 \models [U,scr_{p_r = q_r}] (\neg K_{r} q_s \land \neg K_r \neg q_s) & \text{$r$ remains unsure regarding $q_s$} \\
M, {00}.00 \models [U,scr_{p_r = q_r}] \M_{s}  H_{r} \bot & \text{$s$ considers possible that $r$ is faulty}
\end{array}\]
\end{example}

 \section{Conclusions and Further Research}
 \label{sec:conclusions}

 We gave various dynamic epistemic semantics for the modeling and analysis of byzantine fault-tolerant multi-agent systems, expanding a known logic containing knowledge and hope modalities. We provided complete axiomatizations for our logics and applied them to fault-detection, isolation, and recovery (FDIR) in distributed computing. For future research we envision alternative dynamic epistemic update mechanisms, as well as  embedding our logic into the (temporal epistemic) runs-and-systems approach.

\paragraph*{Acknowledgments.} We are grateful for multiple fruitful discussions with and  enthusiastic support from Giorgio Cignarale, Stephan Felber, Rojo Randriano\-men\-tsoa, Hugo Rinc\'on Galeana, and Thomas Schl\"ogl.

\bibliographystyle{abbrvurl}
\newcommand{\benthem}[1]{} \providecommand{\noopsort}[1]{}

\end{document}